\definecolor{color1}{HTML}{D0B22B}
\definecolor{dred}{RGB}{128,0,0}
\definecolor{colorhkust}{RGB}{20,43,140}
\definecolor{colorshanghaitech}{RGB}{162,0,5}
\definecolor{colortsinghua}{RGB}{116,52,129}
\definecolor{colordark}{RGB}{184,134,11}
\theoremstyle{definition}
\newtheorem{lemma}{Lemma}
\newtheorem{remark}{Remark}
\newtheorem{assumption}{Assumption}
\newcommand{\transpose}{\mathsf{T}}
\newcommand{\Htranspose}{\mathsf{H}}
\newcommand{\norm}[1]{\left\|{#1}\right\|}
\newcommand{\expp}[1]{\mathbb{E}\left[{#1}\right]}
\begin{document}

\title{Green Federated Learning Over Cloud-RAN with Limited Fronthual Capacity and Quantized Neural Networks}

\author{Jiali Wang, \textit{Student Member, IEEE},
Yijie Mao, \textit{Member, IEEE}, \\Ting Wang, \textit{Senior Member, IEEE}, and Yuanming~Shi, \textit{Senior Member, IEEE}
\thanks{J. Wang and T. Wang are with the MoE Engineering Research Center of Software/Hardware Co-design Technology and Application; the Shanghai Key Laboratory of Trustworthy Computing;  Software Engineering Institute, East China Normal University, Shanghai 200062, China. (email: 51215902015@stu.ecnu.edu.cn, twang@sei.ecnu.edu.cn)}
\thanks{Y. Mao and Y. Shi are with the School of Information Science and Technology, ShanghaiTech University, Shanghai
201210, China. (e-mail: \{maoyj, shiym\}@shanghaitech.edu.cn)}
\\
\thanks{(\textit{Corresponding authors: Ting Wang, Yijie Mao})}
}

\maketitle

\begin{abstract}

In this paper, we propose an energy-efficient federated learning (FL) framework for the energy-constrained devices over cloud radio access network (Cloud-RAN), where each device adopts quantized neural networks (QNNs) to train a local FL model and transmits the quantized model parameter to the remote radio heads (RRHs). Each RRH receives the signals from devices over the wireless link and forwards the signals to the CS via the fronthaul link. We rigorously develop an energy consumption model for the local training at devices through the use of QNNs and communication models over Cloud-RAN. Based on the proposed energy consumption model, we formulate an energy minimization problem that optimizes the fronthaul rate allocation, user transmit power allocation, and QNN precision levels while satisfying the limited fronthaul capacity constraint and ensuring the convergence of the proposed FL model to a target accuracy.
To solve this problem, we analyze the convergence rate and propose efficient algorithms based on the alternative optimization technique. Simulation results show that the proposed FL framework can significantly reduce energy consumption compared to other conventional approaches. We draw the conclusion that the proposed framework holds great potential for achieving a sustainable and environmentally-friendly FL in Cloud-RAN.
\end{abstract}

\begin{IEEEkeywords}
Cloud radio access network (Cloud-RAN), federated learning (FL), quantized neural networks (QNN) 
\end{IEEEkeywords}

\section{Introduction}
Federated learning (FL) is a distributed learning paradigm that enables wireless devices to cooperatively train a centralized model on a central CS (CS) while utilizing their respective local datasets \cite{10.1145/3298981,9606720,DBLP:journals/cm/NiknamDR20,DBLP:journals/corr/KonecnyMR15,shi2023taskoriented}.
The most popular approach in FL involves the learning of a global model collaboratively through iterating the following three steps: (i) the CS broadcasts the global model to the participated wireless devices; (ii) each device utilizes its local dataset to update its local model and then transmits its model updates to the CS; (iii) the CS aggregates the model updates sent by the devices, and updates the global model parameters for the next iteration of FL. FL based on the aforementioned learning steps preserves the privacy of the wireless devices, since the CS lacks direct access to their local datasets. Thanks to this unique feature, FL is rendered as an ideal learning framework in various practical applications.
It has garnered widespread attention for its potential to facilitate the development of privacy-sensitive artificial intelligence (AI) applications.

The primary challenges of enabling FL within wireless networks pertain to the scarcity of radio resources and the substantial energy expenditure required for deploying FL over resource-constrained Internet of Things (IoT) devices \cite{9502547}. These issues are further exacerbated in cloud-edge AI systems, where radio resources and device energy consumption are restricted, and the model updates from individual wireless devices during each round of training become more intricate, potentially encompassing millions to billions of parameters. The total energy consumed in FL over wireless networks depends on the energy consumption during the local training process and communication. Therefore, improving the energy efficiency of FL through wireless networks is of utmost importance to facilitate the practical implementation of environmentally sustainable FL \cite{9134426,he2023reconfigurable,9810113}.

To tackle the problem, various model compression approaches such as sparsification and quantization have been explored in \cite{9042352,9014530,9912341,9562516,9269459,9878377,10.1007/3-540-45465-9_59,pmlr-v97-spring19a,pmlr-v119-rothchild20a}. In particular, \cite{9042352,9014530} proposed an analog aggregation model for FL wherein each local worker sparsifies its local updates, e.g., stochastic gradient, via top-$k$ sparsification and error accumulation, and then employs a measurement matrix to projects the updates to a lower dimensional space. In \cite{9912341}, an one-bit compressive sensing scheme was designed for over-the-air FL to reduce the volume of the transmitted data. Meanwhile, \cite{9562516} exploited the observations of the intrinsic temporal sparsity of local updates in FL and characterized the property using a Markovian probability model. In \cite{9269459}, a transmission strategy over massive MIMO communication system via random permutation was developed, while the authors in \cite{9878377} adopted block sparsification to reduce the computation complexity of gradient reconstruction. The count-sketch data structure \cite{10.1007/3-540-45465-9_59} was also used in sparsification \cite{pmlr-v97-spring19a,pmlr-v119-rothchild20a} for FL to tackle the problem of sparse client participation. 

Apart from the aforementioned studies that primarily focus on model compression techniques, various works have recently been introduced, focusing on system-level optimization to enhance the energy efficiency of FL 
\cite{9807354,9673130,9264742,9953187,10001623,YIN2023,DBLP:journals/monet/HuHY22,DBLP:journals/corr/abs-2111-07911,DBLP:journals/corr/abs-2207-09387}. For example, \cite{9807354} studied the balance between energy consumption and carbon footprints in FL with low-power embedded wireless devices. In \cite{9673130}, the authors formulated and solved two optimization problems concerning the overall learning performance and the energy consumption.
\cite{9264742} aimed to minimize the total energy consumption subject to a latency constraint while designing parameters including time allocation, bandwidth, power, computation frequency, and learning accuracy. A transmission scheme was designed in \cite{9953187} to minimize the total energy consumption under a quality-of-service constraint over massive multiple-input multiple-output (mMIMO) networks. \cite{10001623} leveraged sparse neural networks (SNN) to further reduce the computation complexity in local training and formulated an energy minimization problem concerning computing, uploading and broadcasting. In \cite{YIN2023}, the authors utilized a game tree to model the FL training process to achieve a desired balance between the model performance, privacy preservation and energy consumption. \cite{DBLP:journals/monet/HuHY22} proposed a resource optimization under a training workload constraint to balance the tradeoff between the energy consumption and model performance. In \cite{DBLP:journals/corr/abs-2111-07911}, the authors focused on balancing the tradeoff between energy efficiency and convergence rate while controlling the optimal precision level. Furthermore, the work in \cite{DBLP:journals/corr/abs-2207-09387} investigated a joint minimization problem whose goal is to minimize the energy consumption with respect to the precision level in both local training and uplink transmission, the number of communication rounds, the number of local training rounds, and the number of participated devices in each communication round. However, the works in \cite{DBLP:journals/corr/abs-2111-07911, DBLP:journals/corr/abs-2207-09387} did not consider FL in cloud radio access network (Cloud-RAN), in which remote radio heads (RRHs) are deployed to capture signals from devices and to forward them to the CS through high-speed, reliable fronthual links \cite{6786060}. The orthogonal frequency division multiple access (OFDMA) based Cloud-RAN has been shown to significantly improve system scalability, throughput, and coverage thanks to its spectral efficiency and low encoding/decoding complexity \cite{DBLP:journals/tcom/LiuBZ15}. In addition, OFDMA-based Cloud-RAN is also suitable for the latest wireless devices equipped with digital modulation chips \cite{stephen2017joint}. 
All the aforementioned works either focus on improving energy efficiency over wireless networks or focus on system optimization in Cloud-RAN without considering FL. 
To the best of our knowledge, there is no existing work that jointly considers the tradeoff between the model performance and energy efficiency while controlling the precision level, wireless power control, and fronthaul rate allocation design in an OFDMA-based Cloud-RAN system. 
In this paper, we aim to address the research gap by designing an energy-efficient FL framework in OFDMA-based Cloud-RAN. Our proposed framework aims to reduce energy consumption during the entire training process over Cloud-RAN.
The main contributions of this paper are summarized as follows:
\begin{itemize}
\item We propose an energy-efficient FL framework in OFDMA-based Cloud-RAN with QNN.
To facilitate the proposed FL framework, each device trains its own QNN and transmits the quantized model updates to the RRHs. The RRHs forward these signals to the CS, and then the CS aggregates and generates new global model updates, thereby improving energy efficiency and improving system scalability.
\item 
We  develop a comprehensive energy consumption model that takes into account both the local training at each device and the uplink transmission over Cloud-RAN. By considering these two aspects, our proposed energy consumption model accurately quantifies the total energy consumed in the whole training process.
\item We theoretically analyze the convergence of the proposed FL framework to characterize the impact of the precision levels of QNNs. Guided by the analysis, we formulate an energy minimization problem in OFDMA-based Cloud-RAN to optimize the fronthaul rate allocation, user transmit power allocation, and QNN precision levels while satisfying the limited fronthaul capacity constraint and ensuring the convergence of the proposed FL model to a target accuracy. The formulated problem is then solved by an alternative optimization algorithm.
\item Extensive experiments are conducted to verify the superiority of the proposed FL framework compared to several baselines. We also discover the tradeoff between the optimal precision level, wireless power control, and fronthaul rate allocation. Simulation results demonstrate that our proposed FL framework  significantly reduces the total energy consumption, indicating its potential for achieving a sustainable and environmentally-friendly FL in Cloud-RAN.
\end{itemize}

The rest of the paper is organized as follows. We first introduce the system model of FL with QNN and the communication model over Cloud-RAN in Section \ref{systemmodel}. In Section \ref{energymodel}, we develop the energy consumption model of computation and communication and formulate an energy minimization problem. Sections \ref{solution} and \ref{results} solve the formulated problem using alternative optimization and present extensive numerical results of the proposed algorithm. Finally, we conclude the paper in Section \ref{conclusion}.

Throughout this paper, $\mathbb{C}$ and $\mathbb{R}$ denotes the complex number set and the real number set, respectively. $ (\cdot)^\transpose $ and $ (\cdot)^\Htranspose $ denotes the operations of transpose and conjugate transpose.
The superscripts   $ \expp{\cdot} $ is the expectation operator. $ \bm{I}_d $ is the identity matrix with $ d \times d $ size. $|x|$ is the operation that take the absolute value of the real value $x$. $\operatorname{sign}(x)$ represents taking the symbol of the real number $ x $.

\section{System model}\label{systemmodel}

\begin{figure}[htb]
\centering
\centerline{\includegraphics[scale=0.35]{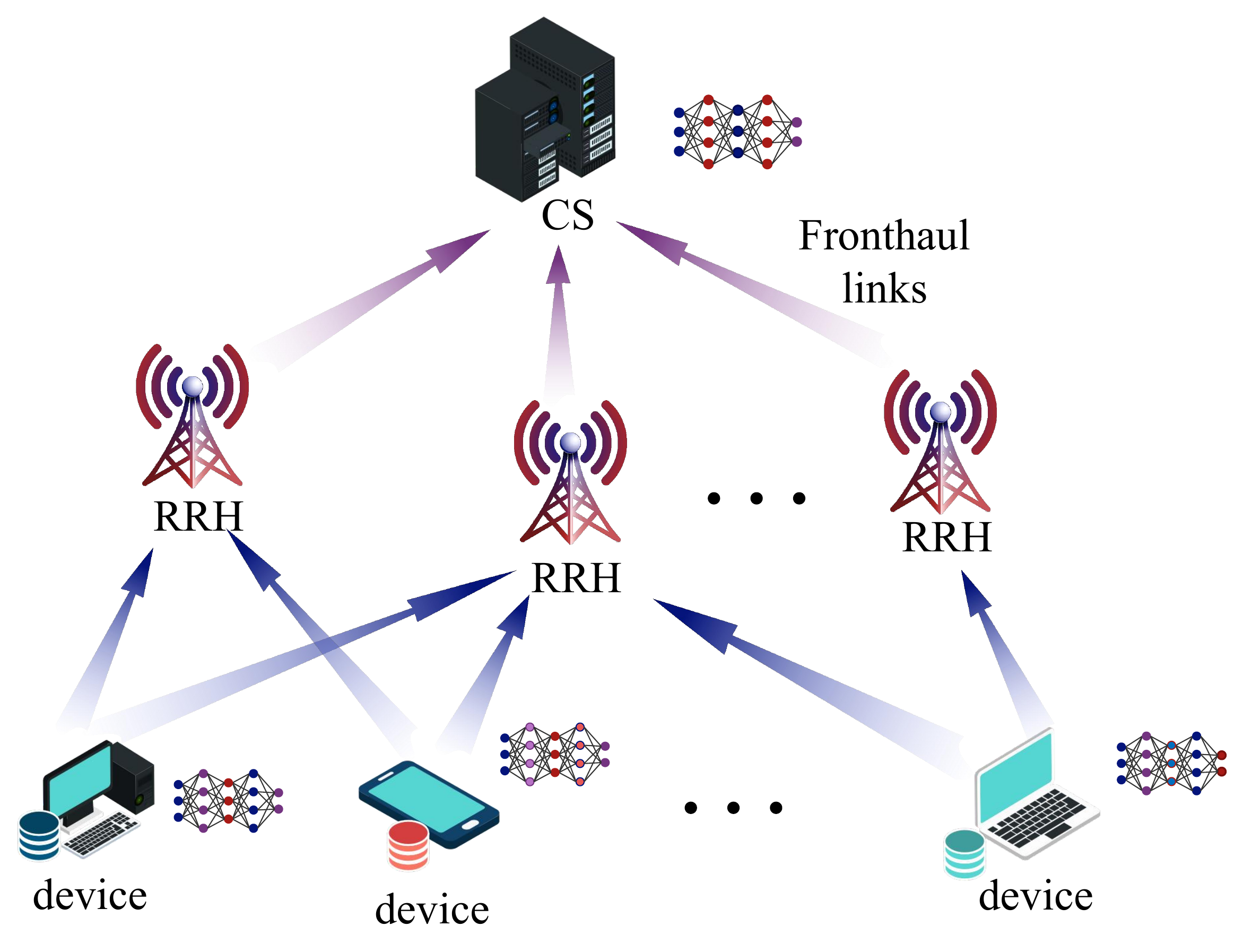}}
\caption{System model of the considered wireless federated learning in Clould-RAN.}
\label{fig:systemmodel}
\end{figure}
As depicted in Fig. \ref{fig:systemmodel}, we consider a wireless FL system in a Cloud-RAN consisting of $K$ devices, $M$ RRHs and one CS. Each RRH is connected to the CS through a fronthaul link with limited capacity. The CS and all local devices collaboratively train a shared learning model. We assume that each device has a local dataset $\mathcal{D}_k = \{  \bm{x}^{ik}, y^{ik}\}_{i=1}^{D_k}$ with $ |\mathcal{D}_k | = D_k $, where $ \{  \bm{x}^{ik}, y^{ik}\}$ denotes the input-output pair for the task with $ \bm{x}^{ik}$ and $y^{ik}$ respectively being the input vector and the corresponding output. The union of all training datasets is denoted as $\mathcal{D}=\bigcup_k\mathcal{D}_k$ with $|\mathcal{D}|=D$ being the total size of data samples. 
For each device $k$, the local loss function $f_k(\bm{w})$ can be defined as
                \begin{equation}
                    f_k(\bm{w}^k) = \frac{1}{{D}_k}\sum_{i\in\mathcal{D}_k}f (\bm{w}^k,  \{  \bm{x}^{ik}, y^{ik}\}),
                \end{equation}
where $f(\bm{w}^k, \{  \bm{x}^{ik}, y^{ik}\})$ is the loss function for parameter $\bm{w}$ on data sample $i$ at device $k$ and $\bm{w}^k$ is the local model trained based on its local dataset $\mathcal{D}_k$. Our goal is to learn a shared model through minimizing the sum of the edge devices’ local loss functions. Specifically, the CS optimizes the 
following global loss function:
\begin{subequations}\label{global}
    \begin{align}
    &\min\limits_{\bm{w}}    F(\bm{w})  =\frac{1}{D}\sum_{k=1}^{K} \sum_{i=1}^{{D}_k }  f( \bm{w}^k;  \bm{x}^{ik}, y^{ik} ) \quad \text{s.t.} \quad \bm{w}^1 =\cdots= \bm{w}^K=\bm{w},
    \end{align}
    \end{subequations}
    where $\bm{w}\in \mathbb{R}^d$ is the global model parameter vector, and $F(\bm{w})$ is the global loss function of the global shared learning model. To solve problem \eqref{global}, the widely adopted training algorithms, i.e., gradient descent methods, are employed with numerous rounds of communication between the CS and devices. Nevertheless, the transmission of gradients or model parameters may result in considerable communication overhead, particularly when dealing with large models. Furthermore, the energy-constrained devices inherent in IoT systems make it challenging to solve problem \eqref{global} using a power-consuming FL approach.
           
\subsection{FL model with Quantized Neural Networks}
To address the aforementioned limitations of FL, this subsection focuses on the design of an energy-efficient FL framework.
Specifically, we adopt FedAvg \cite{mcmahan2017communication} to solve problem \eqref{global} and we incorporate QNNs to curtail the computation energy required at each device. In the following, we first specify the neural network deployed at each local device and then present the FL model and FL algorithm. 

Different from many existing works on FL that adopts the conventional 32-bit floating-point format \cite{mcmahan2017communication,DBLP:journals/jsac/ZhengSC21,8952884}, in this work, we adopt a QNN structure to save the computation energy. The QNN structure quantizes the weights and activations in a fixed-point format. In specific, $C_{\text{prec}}$ bits of precision for quantization are used at each device for the training of an identical QNN structure. In this way, the data can be represented more precisely by increasing $C_{\text{prec}}$ at the expense of more energy consumption.
For given $ C_{\text{prec}} $ quantization bits, we consider a
stochastic quantization scheme \cite{DBLP:journals/jsac/WangXSC22} \cite{DBLP:journals/corr/abs-2006-10672}, where any given $w\in\bm{w}$ is quantized as follows
\begin{equation}\label{qnnquanti}
Q(w) = \left\{ 
\begin{aligned}
& \operatorname{sign}(w) \cdot s_{i} \quad \quad&, \text{with probability }  & \frac{s_{i+1} - |w|}{s_{i+1}- s_i}\\
&   \operatorname{sign}(w) \cdot s_{i+1} & ,\text{with probability } &\frac{|w|- s_i}{s_{i+1}- s_i }
\end{aligned}
   \right.
\end{equation}
with
\begin{equation}
s_i = w_{\min} + i \times \frac{ w_{\max} - w_{\min} }{b}, i=0, 1, \dots, b,
\end{equation}
and $  w_{\min} \triangleq \min\{  |\bm{w} | \}$, $ w_{\max} \triangleq  \max \{  |\bm{w} | \}$, $  b  =2^{C_{\text{prec}} }- 1$. Note that $ \{ s_0, s_1, \dots, s_{{b}}\} $ denotes the knobs uniformly distributed in $ [w_{\min}, w_{\max}] $. Based on the aforementioned quantization approach, we derive the quantized weights of layer $s$ for device $k$ as
\begin{equation}
\bm{w}^{Q,k}_{s} = Q(\bm{w}_s^{Q,k}).
\end{equation}
Also, the outputs of layer $s$ is defined as
  \begin{equation}
  o_s = g_s(\bm{w}^{Q,k}_{s} , o_{s-1}^Q ),
  \end{equation}
  where $ g_s(\cdot) $ denotes the operation of layer $s$ on the input, i.e., activation and batch normalization, and $ o_{s-1}^Q  $ represents the quantized output from the preceding layer $s-1$. We further quantize the output $o_s$ and feed it into the next layer as an input. Such approach is essential to all real valued weights projected into the interval $[-1,1]$ during training, i.e, 
         $\bm{w}^k \triangleq   \operatorname{clip}  (\bm{w}^k, -1,1),$
    where the function $\operatorname{clip} (\cdot,-1,1)$ maps all inputs greater than $1$ to $1$ and all inputs smaller than $-1$ to $-1$, while leaving values within this range unaltered. 
   Next, we introduce the FL model and the FL algorithm. To learn the global model $\bm{w}$, the SGD algorithm is adopted at each device $k$ to update the local parameter $\bm{w}^k $. In each communication round $t$, the CS randomly selects a subset of devices $\mathcal{K}_t \subseteq \mathcal{K}$ to train the model. Without loss of generality, we assume that a constant number of devices is selected in each round, i.e., $|\mathcal{K}_t|=\bar{K},\forall t \in \{1,\cdots, T\} $. Then, the CS broadcasts the current global model parameter $\bm{w}_t$ to the participated devices via RRHs. After receiving the global model parameter, each device $k$ in $\mathcal{K}_t$ performs the entire training procedure by implementing $I$ steps of SGD as follows
\begin{equation}\label{localsgd}
\bm{w}^{k}_{t,\tau} = \bm{w}^{k}_{t,\tau-1} - \lambda_t  \nabla f_k(\bm{w}_{t,\tau-1}^{Q,k}, \xi^k_{t}),\forall \tau = 1,\cdots,I,
\end{equation}
where $ \bm{w}^{k}_{t,\tau}  $ denotes the local parameter at device $ k $ in the local iteration $\tau $ in the global communication round $ t $, and $ \bm{w}^{Q,k}_{t,\tau-1}  $ is the quantized version of the local parameter of device $k$ in the local iteration $\tau-1 $ in the global communication round. $ \lambda_t$ and $ \xi^k_{t} $ respectively denote the learning rate at global iteration $t$ and the mini-batch at local iteration $\tau$ for device $k$ in $\mathcal{K}_t$. We assume that devices that are not selected in the communication round do not carry out local training. For device $k$, the local model update to be transmitted from device $k$ to the CS after $ I $ rounds of local training at global iteration $ t $ can be given by
\begin{equation}
\bm{d}_{t}^k = \bm{w}^{k}_{t,I} - \bm{w}^{k}_{t},
\end{equation}
where $\bm{w}^{k}_{t}$ denotes the global model parameter in the communication round $t$.
The local model updates contains massive amount of elements, making it impractical to transmit $\bm{d}_{t}^k $ with full precision 
especially for energy-constrained devices. Thus, we utilize the quantization scheme in \eqref{qnnquanti} to $ \bm{d}_{t}^k  $ as follows
       \begin{equation}
       \bm{d}_{t}^{Q,k } = Q(\bm{d}_{t}^k ),
       \end{equation}
      where $ \bm{d}_{t}^{Q,k}  $ is the quantized version of $ \bm{d}_{t}^k  $. After performing the aforementioned operation, the selected devices send their quantized local model update to the CS via RRHs. The CS then updates the global model by aggregating the received model updates from all selected devices as
      \begin{equation}\label{globalupdate}
      \bm{w}_{t+1} = \bm{w}_t + \frac{1}{K} \sum_{k\in\mathcal{K}_t}    \bm{d}_{t}^{Q,k}.
       \end{equation}
       The communication round is repeated until the global loss function reaches a desired level of accuracy, denoted by $\epsilon$.  
       We summarize the aforementioned algorithm in Algorithm \ref{alg1}.
\begin{algorithm}[h]
   \caption{ Quantized FL Algorithm via Cloud-RAN}\label{alg1}
   \LinesNumbered
  \KwIn{The number of devices $K$, the number of RRHs $ M $, local running steps $ I $, the number of subcarriers $ N $, initial model parameters $ \bm{w}_0 $, $ t=0 $, target accuracy $ \epsilon $;}
  \Repeat{
  target accuracy $ \epsilon $ is satisfied
  }{The CS randomly selects participated devices and broadcasts $ \bm{w}_t $ to all the devices over error-free channel\;
  Each selected device runs $ I $ steps of SGD according to \eqref{localsgd}\;
  Each selected device transmits $ \bm{d}_t^{Q,k}$ to the CS via Cloud-RAN\;
  The CS updates the global model via \eqref{globalupdate}\;
  $ t\leftarrow t+1 $\;
  }
  \end{algorithm}
\subsection{Communication model}
In this subsection, the communication model is delineated. We consider an OFDMA-based uplink Cloud-RAN system for FL. The system consists of $K$ single-antenna devices and $M$ single-antenna RRHs connected by a wireless link with a total bandwidth of $B$ Hz. The device set and the RRH set are denoted by $\mathcal{K} = \{1, \cdots, K\}$ and $\mathcal{M} = \{1,\cdots, M\}$, respectively. The bandwidth is equally divided into $N$ subcarriers (SCs). The SC set is denoted by $\mathcal{N} = \{1, \cdots, N\}$ where we assume that each SC $n\in N$ belongs to only one device. The set $\Omega_k$ denotes the SCs allocated to device $k,\forall k \in \mathcal{K}$.
 Although dynamic SC allocation can improve the spectral efficiency, in this paper we focus on the joint design of the precision level in QNN, the transmit power of each device, and the fronthaul rate allocation. Thus, for simplicity we assume that the SC allocations among devices, $\Omega_k,\forall k\in\mathcal{K}$ is predetermined without taking dynamic SC allocation into consideration \cite{DBLP:journals/tcom/LiuBZ15}. 

Specifically, at a given communication round $t$, each device $k $ first encodes and modulates each entry of the local updates into $s_{k,n}$, which is an entry of the local model update $ \bm{d}_{t}^{Q,k }  $ of device $k$. Each device then transmits the signal to the RRHs in the uplink. Afterward, the signal received by RRH $m$ is sent to the baseband where it is transformed into parallel signals and demodulated into $N$ streams.
To simplify the notation, we omit the index of communication round $t$ by writing the transmit symbol $s_{k,n}(t)$ as $s_{k,n} $. Let $p_{k,n}$ be the transmit power of device $k$ at SC $ n $, and $ h_{m,k,n} $ be the channel coefficient from device $ k $ to
RRH $ m $ at SC $ n $. The equivalent baseband complex symbol received by RRH $ m $ at SC $n\in\Omega_k$ can be expressed as
\begin{equation}
y_{m,n} = h_{m,k,n} \sqrt{p_{k,n}} s_{k,n}+ z_{m,n},
\end{equation}
where $z_{m,n} \sim\mathcal{N} (0, \sigma^2_{m,n})$ is the summation of additive white Gaussian noise (AWGN) and out-of-cluster interference at RRH $m$ at SC $ n $ \cite{DBLP:journals/tcom/LiuBZ15}. Without loss of generality, we assume that $z_{m,n}$ is independent over $ m $ and $ n $.
Due to the limited fronthaul capacity, we apply the quantize-and-forward scheme to forward the baseband symbols to the CS \cite{DBLP:journals/tcom/LiuBZ15}. Each RRH first quantizes its baseband signal and then transmits the corresponding digital codewords to the CS. We assume the independence of the symbols received at each RRH as well as the signal quantization across different RRHs. For given the quantization error $ e_{m,n}$ associated to the received signal $  y_{m,n} $ with zero mean and variance $ q_{m,n} $, the baseband quantized symbol of $ \tilde{y}_{m,n} $ is expressed as
      \begin{equation}
      \begin{aligned}
\tilde{y}_{m,n} &=  y_{m,n} + e_{m,n} =  h_{m,k,n} \sqrt{p_{k,n}} s_{k,n} + z_{m,n} + e_{m,n}.
      \end{aligned}
\end{equation}
It is assumed that $  e_{m,n} $ is independent over $m$ and $ n $. 
Once the symbols are quantized, each RRH converts the parallel encoded bits $\tilde{y}_{m,n}$ into a serial form and transmits them to the CS through the fronthaul link to enable joint information decoding.
Then, the CS uses the quantization codebooks to recover the baseband quantized symbols $  \tilde{y}_{m,n}   $. 
In this context, we adopt a uniform scalar quantization method which is commonly used in practice \cite{DBLP:journals/tcom/LiuBZ15}. We use this method at each RRH to obtain the sum-rate. A common way to implement uniform quantization is through separate in-phase/quadrature (I/Q) quantization \cite{DBLP:journals/tcom/LiuBZ15}.
Each RRH $m$ uses $C_{m,n}$ quantization bits to perform uniform scalar quantization, leading to $2^{C_{m,n}}$ quantization levels. Under the uniform scalar quantization scheme, the transmission rate from RRH $m$ to the CS in its fronthaul link is expressed as \cite{DBLP:journals/tcom/LiuBZ15}
\begin{equation}
G_m = \sum_{n=1}^{N} G_{m,n} \leq \bar{G}_m,m=1,\cdots, M,
\end{equation}
where $  \bar{G}_m $ is the maximum transmission rate of RRH $ m $. The transmission rate in RRH $m$'s fronthaul link to forward its received data at SC $ n $ is \cite{DBLP:journals/tcom/LiuBZ15}
\begin{equation}
G_{m,n} = \frac{2BC_{m,n}}{N}, \forall m,n,
\end{equation}
where $ B $ is the bandwidth. According to the results of \cite[Proposition 3.2]{DBLP:journals/tcom/LiuBZ15}, the variance $ q_{m,n} $ can be written as
$q_{m,n} = 3 (    |   h_{m,k,n}  |^2p_{k,n} + \sigma_{m,n}^2  ) 2^{-   C_{m,n}}$.
Thus, the achievable rate for device $ k $ at SC $ n $ is given by
\begin{equation}
\begin{aligned}
& R_{k,n} = \frac{B}{N} \log_2 \left(   1+ \sum_{m=1}^{M} \frac{ |   h_{m,k,n}  |^2p_{k,n}  }{  \sigma_{m,n}^2   +    3 (    |   h_{m,k,n}  |^2p_{k,n} + \sigma_{m,n}^2  ) 2^{-   2C_{m,n}}}     \right).
\end{aligned}
\end{equation}
The corresponding end-to-end transmission rate of device $ k $ is expressed as
\begin{equation}
\begin{aligned}\label{rk}
R_{k}&= \sum_{n\in\Omega_k} R_{k,n}.  \\
\end{aligned}
\end{equation}

\section{ System Energy Consumption Model and problem formulation}\label{energymodel}
In this section, an energy consumption model is proposed for the computation and communication phases of the FL Cloud-RAN system, and an energy minimization problem is formulated to optimize the fronthaul rate allocation, user transmit power allocation, and QNN precision levels subject to constraints on power budget, fronthaul capacity, and learning accuracy.
\subsection{Computing Energy Consumption}\label{energymodelcomp}
   \begin{figure}[htb]
   \centering
   \centerline{\includegraphics[scale=0.35]{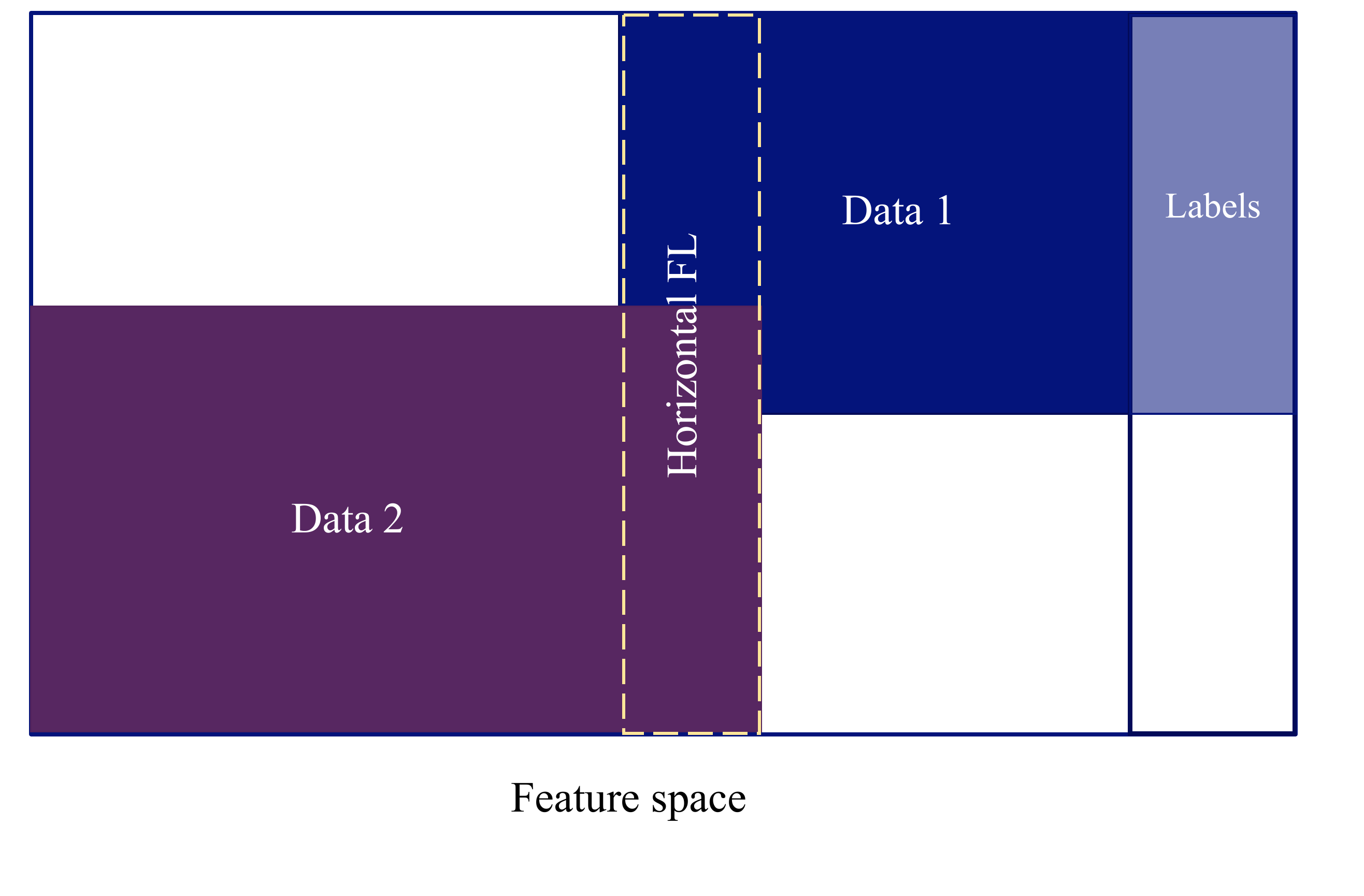}}
  \caption{An illustration of a two dimensional processing chip.}
  \label{fig:qnn}
  \end{figure}
As shown in Fig. \ref{fig:qnn}, we consider a typical two-dimensional processing chip \cite{DBLP:journals/corr/abs-2111-07911} \cite{DBLP:journals/corr/abs-2207-09387} where the chip consists of a parallel neuron array, $ u $ multiply-accumulate (MAC) units, a main buffer and a local buffer. Specifically, the weights and activations of the current layers are stored in the main buffer, whereas those of the used weights and activations are cached in the local buffer. The energy consumption model of a MAC operation with the precision level of quantization $ C_{\text{prec}}$ can be expressed as $E_{\operatorname{MAC}}(C_{\text{prec}}) = A\left (\frac{C_{\text{prec}}}{C_{\text{max}}}  \right)^\alpha$,	 
where $ A >0$, $ 1<\alpha<2 $, and $ C_{\text{max}} $ denotes the maximum precision level \cite{DBLP:conf/acssc/MoonsGBV17} \cite{Moons2019}. 
The energy consumption required to access a local buffer is denoted by $ E_{lb} $, which can be modeled as $ E_{\operatorname{MAC}}(C_{\text{prec}}) $. Also, the energy consumption for accessing a main buffer is denoted as $ E_{mb}=2E_{\operatorname{MAC}}(C_{\text{prec}}) $. Thus, at communication round $ t $, the energy consumption of device $k$ for one iteration of local training $ E^{\text{C},k}(C_{\text{prec}}) $ can be expressed as \cite{DBLP:conf/acssc/MoonsGBV17} \cite{Moons2019} 
 \begin{equation}
E^	{\text{C},k}(C_{\text{prec}})  = E_{C}(C_{\text{prec}}) + E_{W}(C_{\text{prec}}) + E_A(C_{\text{prec}}),
     \end{equation}
  where 
  \begin{itemize}
\item $ E_{C}(C_{\text{prec}}) =   E_{\operatorname{MAC}}(C_{\text{prec}}) N_{\operatorname{MAC}} + 3O_s  E_{\operatorname{MAC}}(C_{\text{max}}) $ denotes the computing energy, $  N_{\operatorname{MAC}} $ is the number of MAC operations and $ O_s $ represents the number of intermediate
outputs throughout the network \cite{DBLP:journals/jmlr/HubaraCSEB17}. $ E_{C}(C_{\text{prec}})  $ is the summation of energy consumed by batch normalization, activation, and bias.
\item $  E_{W}(C_{\text{prec}})  = E_{mb}    N_{w} + E_{lb}  N_{\operatorname{MAC}}   \sqrt{   \frac{  C_{\text{prec}   }}{b C_{\text{max}}  }}$ denotes the access energy for obtaining weights from buffers where $  N_{w} $ represents the number of weights,
\item $ E_A(C_{\text{prec}})  = 2E_{mb}O_s + E_{lb}  N_{\operatorname{MAC}} \sqrt{   \frac{  C_{\text{prec}   }}{b C_{\text{max}}  }}$ denotes the access energy for obtaining activations from the buffers.
  \end{itemize}  
  Thus, at one communication round $t$, the total computing energy consumption for all selected devices is the sum of computing energy consumed by the $ I $ rounds of local model updates, which is given by
  \begin{equation}
  E^{C, t} = I\sum_{k\in\mathcal{K}_t} E^{C,k}(C_{\text{prec}}) .
  \end{equation}

\subsection{Communication Energy Consumption}\label{transenergymodel}
In this subsection, an energy consumption model is proposed for communication via OFDMA-based Cloud-RAN, considering the transmission latency and the transmit power of devices and the fronthaul capacity.

First, we divide the whole uplink transmission latency into two parts, i.e. the latency caused by sending the local training updates from devices to the RRHs over the wireless channel and the latency due to the transmission from the RRHs to the CS via the fronthaul links. The upload latency from the device $ k $ to the RRHs over the wireless channel is given by \cite{DBLP:journals/corr/abs-2202-07775}, \cite{DBLP:conf/icassp/LiLL18} 
       \begin{equation}\label{tk1}
        \begin{aligned}
       T_{k,1}& = \frac{\| \bm{d}^{Q,k}_t\|}{R_k} = \frac{  \| \bm{d}^{k}_t   \| C_{\text{prec}} }{R_kC_{\text{max}}   }  ,
        \end{aligned}
       \end{equation}
       where $ R_k $ is defined in \eqref{rk}.
       The fronthaul latency resulting from the forwarding signals from the RRH $ m $ to the CS over the fronthaul link is given by \cite{DBLP:journals/corr/abs-2202-07775}, \cite{DBLP:conf/icassp/LiLL18} 
       \begin{equation}\label{tk2}
                T_{k,m,2} = \frac{ 2\| {\bm{d}}^{Q,k}_t \| C_{m,k} }{ G_{m} } = \frac{2  \| \bm{d}^{k}_t   \| C_{\text{prec}} C_{m,k}}{  G_{m}  C_{\text{max}}  } ,
        \end{equation}
        where $ C_{m,k} = \sum_{n\in\Omega_k}C_{m,n} $ denotes the number of bits used to quantize both real and imaginary parts of the uplink signal for the SCs in $ \Omega_k $.

        Similarly, the total power consumption is divided into two parts: the power consumed by the users and the fronthaul link. Let $p_k$ represent the transmit power consumption of device k, and $P_m, m\in\mathcal{M}$ represent the power consumed by the $m$-th RRH \cite{DBLP:journals/tgcn/NgoTDML18} \cite{DBLP:journals/twc/MasoumiE20}.		
    For simplicity, we omit the amplifier and the circuit power consumed by the RRHs, which can be considered to be independent with the fronthaul rate \cite{DBLP:journals/tgcn/NgoTDML18}.
       Thus, at communication round $ t $, the total communication energy consumption $ E^{\text{trans}}_{\mathcal{K}_t} $ at all active devices and all RRHs are given as follows
       \begin{equation}\label{E_Kt_trans}
       \begin{aligned}
    &	E^{\text{trans}}_{\mathcal{K}_t}  = &\underbrace{ \sum_{k\in\mathcal{K}_t} \left(  T_{k,1}\times p_k\right) }_{\text{energy consumed by devices}} +  \underbrace{  \sum_{m=1}^{M} \left(  \sum_{k\in\mathcal{K}_t} T_{k,m,2} \right) \times P_m  }_{\text{energy consumed by the RRHs}},
         \end{aligned}
       \end{equation} 
       where $ P_m = B(\sum_{n=1}^{N}\frac{2C_{m,n}}{N} )  P_{fl, m} = G_m    P_{fl, m}  $ and $ P_{fl,m} $ denotes the traffic-dependent power (in Watt/bit/s).

\subsection{Problem Formulation}

In this work, we aim at minimizing the total computation and communication energy consumption while ensuring the convergence accuracy $\epsilon$. According to \cite{DBLP:journals/corr/abs-2111-07911}, a tradeoff exists between the energy consumption and the convergence rate with respect to the level of precision $ C_{\text{prec}}$ of QNN, quantization bits of fronthaul links $ C_{m,n} $, and transmit power of devices $ p_{k,n} $. Hence, determining the optimal values of $ C_{\text{prec}} $, $ C_{m,n} $, and $ p_{k,n} $ is crucial to enhance the tradeoff and achieve the target accuracy. The formulated problem is given as
\begin{subequations}
\label{27}
\begin{align}
\label{wsr}
\underset{C_{\text{prec}} , \{ C_{m,n},p_{k,n} \}}{\operatorname{min}}&  \expp{  \sum_{t=1}^{T }  \sum_{k\in\mathcal{K}_t}  I E^{\text{C},k}(C_{\text{prec}})  + \sum_{t=1}^T E^{\text{trans}}_{\mathcal{K}_t}  } \\
\label{cons_quan_qnn}   \text { s.t. } \quad &  C_{\text{prec}} \in [1,\cdots, C_{\text{max}}] ,& \\ 
\label{cons_f}    & \expp{F(\bm{w}_T)} - F(\bm{w}^*) \leq \epsilon , \\
\label{cons_power}  &\sum_{n\in\Omega_k} p_{k,n} \leq \bar{P}_k, \forall k \in \mathcal{K}_t,t \in [T], \\
\label{cons_fronthaullink} &\sum_{n=1}^{N} C_{m,n} \leq \bar{C}_m, \forall m\in \mathcal{M},  \\
\label{front_C} &C_{m,n} \in \mathbb{N}^+ , \forall m\in \mathcal{M}, \forall n \in \mathcal{N},
\end{align}
\end{subequations}
where $ I $ denotes the number of local iterations, $ \expp{F(\bm{w}_T)}  $ denotes the expectation of global loss function $ F $ after $ T $ global iterations, $  \bar{P}_k $ is the maximum transmit power for device $ k $, and $   F(\bm{w}^*) $ denotes the minimum value of $ F $ with optimal parameters $  \bm{w}^*$. Note that constraint \eqref{cons_quan_qnn} restricts the feasible range of $ C_{\text{prec}} $,  \eqref{cons_f} restricts the target accuracy of convergence, \eqref{cons_power} defines the constraint on the maximum transmit power for each device, constraint \eqref{cons_fronthaullink} is the fronthaul capacity limit, and constraint \eqref{front_C} specifies that the number of quantization bits are positive integers.

\section{ PROPOSED APPROACH FOR ENERGY-EFFICIENT FL in Cloud-RAN}\label{solution}

In this section, we first analyze the relationship between the convergence speed and convergence accuracy.
Then, we propose to solve problem \eqref{27} by the alternating optimization approach. Specifically, the transmit power $ \{p_{k,n}\},\forall k\in\mathcal{K}_t, n\in\Omega_k $, the precision level for quantization
$ \{C_{\text{prec}} \}$ in QNN, and the quantization bits allocation $ \{C_{m,n} \},\forall m,n$ for Cloud-RAN
are optimized in an alternative manner until the algorithm
converges.
\subsection{Convergence analysis}\label{assum}
To solve problem \eqref{27}, we first characterize the relationship between the convergence speed $T$ and convergence accuracy $ \epsilon $.
We provide five assumptions in this subsection for the convergence analysis. Assumption 1 - 5 are fairly standard and have been widely used in the convergence analysis \cite{pmlr-v54-mcmahan17a}. Assumption \ref{ass5} sets an upper bound on the maximum value of the weight, ensuring that the quantization error can be bounded properly \cite{DBLP:journals/jsac/ZhengSC21}.

\begin{assumption}\label{ass1}
The local loss function $ f_k(\bm{w}), \forall k\in \mathcal{K} $ is $L $-smooth: for all $ \bm{x} $ and $ \bm{y} $, we have $ f_k(\bm{\bm{x}}) \leq f_k(\bm{y}) + (\bm{x} - \bm{y})^\transpose + \frac{L}{2}\norm{\bm{x} - \bm{y}}_2^2$.
\end{assumption}

\begin{assumption}\label{ass2}
The local loss function $f_k(\bm{w}), \forall k\in \mathcal{K}   $ is $ \mu $-strongly convex: for all $ \bm{x} $ and $ \bm{y} $, we have $   f_k(\bm{\bm{x}}) \geq f_k(\bm{y}) + (\bm{x} - \bm{y})^\transpose + \frac{\mu}{2}\norm{\bm{x} - \bm{y}}_2^2$.
\end{assumption}

\begin{assumption}\label{ass3}
The variance of local gradients is bounded by $ \expp{\norm{\nabla f_k(\bm{w}_t^k, \zeta^k_t) -\nabla f_k( \bm{w}_t^k) }^2} \leq \sigma_k^2,\forall k\in \mathcal{K}, t\in\{1, \cdots, T\} $, where $   \zeta^k_t$ is sampled from $ k $-th device's local data uniformly at random.
\end{assumption}

\begin{assumption}\label{ass4}
The expected squared norm of local gradients is uniformly bounded, i.e., 

$ \expp{\norm{\nabla f_k(\bm{w}_t^k, \zeta^k_t) }^2} \leq G^2,\forall k\in \mathcal{K}, t\in\{1, \cdots, T\}  $.
\end{assumption}

\begin{assumption}\label{ass5}
$ \max \norm{\bm{w}_t^k}_{\infty} \leq M , \forall k \in \mathcal{K},t \in \{1, \cdots, T\}$, for constant $ M > 0 $. 
\end{assumption}

We propose Lemma \ref{lemma1} to characterize the features of the aforementioned stochastic quantization method.
\begin{lemma} \label{lemma1}
For the stochastic quantization $  Q(\cdot) $, a scalar value
$\expp{Q(\bm{w})} = \bm{w}$,
and the associated quantization error can be bounded by 
$\expp{\norm{ Q(\bm{w}) - \bm{w}}^2} \leq \frac{d \varepsilon \norm{\bm{w}}^2}{4 b^ 2}$,
where $ \varepsilon = \frac{| w_{\max}  -  w_{\min} |^2}{  \norm{\bm{w}}^2  } $, $ 0 \leq \varepsilon \leq 1 $, and $ b =2^{C_{\text{prec}} }- 1   $.
\end{lemma}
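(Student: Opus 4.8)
The plan is to exploit the fact that $Q(\cdot)$ acts independently and identically on each scalar entry of $\bm{w}$, so that both claims reduce to a one-dimensional computation followed by a sum over the $d$ coordinates. For a single entry $w$ whose absolute value lies in a subinterval $[s_i,s_{i+1}]$, I would first establish unbiasedness by direct expansion: $\expp{Q(w)} = \operatorname{sign}(w)\,[\,s_i\,\tfrac{s_{i+1}-|w|}{s_{i+1}-s_i} + s_{i+1}\,\tfrac{|w|-s_i}{s_{i+1}-s_i}\,]$, where the numerator collapses to $|w|(s_{i+1}-s_i)$, giving $\expp{Q(w)} = \operatorname{sign}(w)\,|w| = w$. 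Stacking coordinates then yields $\expp{Q(\bm{w})} = \bm{w}$.

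For the variance, I would note that once unbiasedness is in hand, $\expp{(Q(w)-w)^2}$ is simply the variance of the two-point random variable $Q(w)$, which takes the values $\operatorname{sign}(w)s_i$ and $\operatorname{sign}(w)s_{i+1}$ with the two stated probabilities. Since the variance of any two-point law equals the product of its two probabilities times the square of the gap between its values, and here the gap has magnitude $\Delta \triangleq s_{i+1}-s_i = (w_{\max}-w_{\min})/b$, the two probabilities multiply to $(s_{i+1}-|w|)(|w|-s_i)/\Delta^2$ and the $\Delta^2$ cancels, leaving the per-entry error exactly equal to $(s_{i+1}-|w|)(|w|-s_i)$.

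The structural observation that is really the crux of the argument is that the knobs $\{s_0,\dots,s_b\}$ are \emph{uniformly} spaced over $[w_{\min},w_{\max}]$, so every subinterval has the same width $\Delta$ regardless of which one $|w|$ falls into. An AM--GM step then bounds the product uniformly in $i$: $(s_{i+1}-|w|)(|w|-s_i) \le \big(\tfrac{(s_{i+1}-|w|)+(|w|-s_i)}{2}\big)^2 = \Delta^2/4 = (w_{\max}-w_{\min})^2/(4b^2)$. Summing this identical bound over all $d$ coordinates gives $\expp{\norm{Q(\bm{w})-\bm{w}}^2} \le d(w_{\max}-w_{\min})^2/(4b^2)$, and substituting $\varepsilon = |w_{\max}-w_{\min}|^2/\norm{\bm{w}}^2$ delivers the claimed bound $d\varepsilon\norm{\bm{w}}^2/(4b^2)$.

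I anticipate no serious obstacle here; the only point demanding care is to keep the uniform-spacing fact explicit, since that is what makes the coordinate-wise error bound the identical $\Delta^2/4$ for every entry and lets the factor $d$ factor out cleanly. The auxiliary claim $0 \le \varepsilon \le 1$ can be recorded in passing, as it follows from $w_{\max}^2 \le \norm{\bm{w}}^2$ together with $(w_{\max}-w_{\min})^2 \le w_{\max}^2$, the latter because $0 \le w_{\min} \le w_{\max}$.
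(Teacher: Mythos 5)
Your proposal is correct and follows essentially the same route as the paper's Appendix~A: coordinate-wise unbiasedness by direct expansion, reduction of the per-entry error to $(|w|-s_i)(s_{i+1}-|w|)$ (you via the two-point variance formula, the paper via direct expansion of the squared deviations), the identical AM--GM bound $(s_{i+1}-s_i)^2/4 = (w_{\max}-w_{\min})^2/(4b^2)$, and summation over the $d$ coordinates followed by the substitution of $\varepsilon$.
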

\begin{proof}
The complete proof can be found in Appendix \ref{app1}.
\end{proof}
From Lemma \ref{lemma1}, we could infer that the quantization scheme is unbiased since the expectation is zero. However, the quantization error increases with large model parameters.

\theorem  \label{the1} With given learning rate $  \lambda_t = \frac{\beta}{t+\gamma}$, $ \beta > \frac{1}{\mu} $, we have
\begin{equation}
\expp{ F({\bm{w}}_T)} - F(\bm{w}^*) \leq   \frac{L}{2(\gamma+TI)}  \frac{\beta^2 D }{\beta \mu -1} ,
\end{equation}
where $ D $ is defined as
\begin{equation}
\begin{aligned}
D&= \sum_{k=1}^{K} \frac{\sigma_k^2}{K^2} + \frac{d \varepsilon M	^2}{4 (2^{C_{\text{prec}}} -1)^ 2}  (1- \mu)+ 4  (I-1)^2G^2+\frac{4d \varepsilon I ^2 G^2}{\bar{K}  4(2^{C_\text{prec}} - 1) ^2  }   +  \frac{ 4(K-\bar{K}) }{ \bar{K}(K-1) }I^2G^2.
\end{aligned}
\end{equation}
\begin{proof}
The complete proof can be found in Appendix \ref{app3}.
\end{proof}
\begin{remark}
From Theorem \ref{the1} and the constraint in \eqref{cons_f}, we derive the following inequality by upper bounding the error gap by $\epsilon$:
\begin{equation}
\expp{ F({\bm{w}}_T)} - F(\bm{w}^*) \leq \frac{L}{2(\gamma+TI)}  \frac{\beta^2 D }{\beta \mu -1} \leq \epsilon.
\end{equation}
For given other fixed parameters, we can characterize the relationship between the total communication rounds $T$ and $\epsilon$ through taking the equality as \cite{DBLP:journals/corr/abs-2111-07911}
\begin{equation}\label{eps_t_rela}
T=\frac{L} {2I\epsilon}\frac{\beta^2 D }{\beta \mu -1}-\frac{\gamma}{I}.
\end{equation}
Note that the relationship between $T$ and $\epsilon$ plays an important role to simplify and solve the optimization problem in Section \ref{optqnn} and \ref{optcranpower}.
\end{remark}

\subsection{Optimizing level of precision for QNN}\label{optqnn}
In this subsection, we fix the transmit power $p_{k,n}$ and quantization bits allocation $C_{m,n}$ for Cloud-RAN and optimize the quantization precision level for QNN by solving the following problem:
\begin{subequations}
\label{qnn1}
\begin{align}
\label{obj}
\underset{C_{\text{prec}}}  {\operatorname{min}}\quad &  \expp{ \sum_{t=1}^{T }  \sum_{k\in\mathcal{K}_t}  I E^{\text{C},k}(C_{\text{prec}})  +  \sum_{t=1}^T E^{\text{trans}}_{\mathcal{K}_t}  } \\
\text { s.t. } \quad &   \expp{F(\bm{w}_T)} - F(\bm{w}^*) \leq \epsilon , \quad C_{\text{prec}} \in [1,\cdots, C_{\text{max}}],
\end{align}
\end{subequations}
Note that we assume that $ K_t $ devices are randomly selected at iteration $ t ,\forall t\in\{1, \cdots, T\}$, the expectation of the objective function of \eqref{obj} can be derived as 
\begin{equation}
\begin{aligned}
\expp{ \sum_{t=1}^{T }  \sum_{k\in\mathcal{K}_t}  I E^{\text{C},k}(C_{\text{prec}})  +  \sum_{t=1}^T E^{\text{trans}}_{\mathcal{K}_t}  } =\frac{\bar{K} T}{K}  \left(   \sum_{k=1}^{K}   I E^{\text{C},k}(C_{\text{prec}})   +   E^{\text{trans}}_{\mathcal{K}}  \right) .
\end{aligned}
\end{equation}
Based on the relationship between $T$ and $\epsilon$ specified in \eqref{eps_t_rela}, the problem is therefore approximated as
\begin{subequations}
\label{obj_C_prec}
\begin{align}
\label{obj1}
\underset{C_{\text{prec}}}  {\operatorname{min}}\quad &   \frac{\bar{K}}{K} \left(\frac{L} {2I\epsilon}\frac{\beta^2 D }{\beta \mu -1}-\frac{\gamma}{I} \right) \left(   {  \sum_{k=1}^{K}   I E^{\text{C},k}(C_{\text{prec}})   +   E^{\text{trans}}_{\mathcal{K}}  }  \right)   \triangleq f^E(C_{\text{prec}})\\ 
\label{c_pre}   \text { s.t. } \quad &  C_{\text{prec}} \in [1,\cdots, C_{\text{max}}] ,&
\end{align}
\end{subequations}
where $E^{\text{trans}}_{\mathcal{K}}   $ is defined in \eqref{E_Kt_trans}.
Note that any optimal solution $ C_{\text{prec}}^\star $ of problem \eqref{obj1} can satisfy problem \eqref{obj} \cite{9488679}.
To solve problem \eqref{obj_C_prec}, we first relax $C_{\text{prec}}  $ as a continuous variable, which will be rounded back to an integer value later. We observe that, as $ C_{\text{prec}} $ increases, $ E^{\text{C},k}(C_{\text{prec}}) $ and $  E^{\text{trans}}_{\mathcal{K}_t} $ becomes larger while $ T $ becomes smaller with fixed $ \{p_{k,n}\} $ and $ \{C_{m,n}\} $. Hence, we can minimize the objective function in \eqref{obj1} to find the local optimal $ C_{\text{prec}}^* $. Due to the fact that the objective function \eqref{obj1} is differentiable with respect to $ C_{\text{prec}} $ in the given range, we can find $ C_{\text{prec}}^* $ by solving $ \frac{\partial  f^E(C_{\text{prec}})}{\partial C_{\text{prec}}} =0$ according to Fermat's Theorem \cite{DBLP:books/sp/BauschkeC11}. Since it is an unconstrained problem and difficult to derive $ C_{\text{prec}}^*  $ analytically, we can obtain the local optimal $  C_{\text{prec}}^\star $ numerically by a line search method \cite{DBLP:books/sp/NocedalW99}.

\subsection{Optimizing the transmit power and quantization bits allocation for Cloud-RAN}\label{optcranpower}
In this subsection, we fix the quantization precision level for QNN and optimize the transmit power and quantization bits allocation for Cloud-RAN by solving the following problem:
\begin{subequations}
\label{cran1}
\begin{align}
\label{obj2}
\underset{\{C_{m,n},p_{k,n}\} }{\operatorname{min}}&  \expp{  \sum_{t=1}^{T }  \sum_{k\in\mathcal{K}_t}  I E^{\text{C},k}(C_{\text{prec}})  + \sum_{t=1}^T E^{\text{trans}}_{\mathcal{K}_t}  } \\
\text { s.t. }  & \expp{F(\bm{w}_T)} - F(\bm{w}^*) \leq \epsilon , \sum_{n\in\Omega_k} p_{k,n} \leq \bar{P}_k, \forall k \in \mathcal{K}_t, \\
&\sum_{n=1}^{N} C_{m,n} \leq \bar{C}_m, \forall m\in \mathcal{M},  C_{m,n} \in \mathbb{N}^+ , \forall m\in \mathcal{M}, \forall n \in \mathcal{N} .
\end{align}
\end{subequations}
Similarly, we adopt \eqref{eps_t_rela} and set $ T=\frac{L} {2I\epsilon}\frac{\beta^2 D }{\beta \mu -1}-\frac{\gamma}{I} $ and approximate the problem \eqref{cran1} as
\begin{subequations}
\label{cran2}
\begin{align}\label{obj_E_trans}
\underset{\{C_{m,n},p_{k,n}\} }{\operatorname{min}}&  \frac{K_t}{K} \left(  \frac{L} {2I\epsilon}\frac{\beta^2 D }{\beta \mu -1}-\frac{\gamma}{I}  \right) {   E^{\text{trans}}_{\mathcal{K}}  }   \\
\text { s.t. }  &\sum_{n\in\Omega_k} p_{k,n} \leq \bar{P}_k, \forall k \in \mathcal{K}_t,\\
&\sum_{n=1}^{N} C_{m,n} \leq \bar{C}_m, \forall m\in \mathcal{M},  C_{m,n} \in \mathbb{N}^+ , \forall m\in \mathcal{M}, \forall n \in \mathcal{N}
\end{align}
\end{subequations}
Since the objective function \eqref{obj_E_trans} is not convex over $ C_{m,n} $ with given $ p_{k,n} $, we transform the design variables $C_{m,n}$ in problem \eqref{cran2} into 
$\psi_{m,n} = 2^{\frac{NG_{m,n}}{B}} = 2^{2C_{m,n}}$.
The term $ E^{\text{trans}}_{\mathcal{K}}  $ can be transformed into the following form:
\begin{equation}\label{phi_first}
\begin{aligned}
E^{\text{trans}}_{\mathcal{K}}=  &\sum_{k=1}^{K}  \frac{  \| \bm{d}^{k}_t   \| C_{\text{prec}} p_k /C_{\text{max}}  }{ \sum_{n\in\Omega_k}     \frac{B}{N} \log_2 \left(   1+ \sum_{m=1}^{M} \frac{ |   h_{m,k,n}  |^2p_{k,n} \psi_{m,n} }{  \sigma_{m,n}^2   \psi_{m,n} +        |   h_{m,k,n}  |^2  p_{k,n} + \sigma_{m,n}^2    }\right)   }  \\
&+  \sum_{m=1}^{M}\sum_{k=1}^{K}   {  \| \bm{d}^{k}_t  \| P_{fl, m} C_{\text{prec} } / C_{\text{max}} \sum_{n\in\Omega_k} \log_2 \psi_{m,n}  }, \\
\end{aligned}
\end{equation}
and problem \eqref{cran2} can be rewritten as 
\begin{subequations}
\label{cran3}
\begin{align}
\label{obj2_2variable}
\underset{\{\psi_{m,n},p_{k,n} \}}{\operatorname{min}}  &\quad  {   E^{\text{trans}} _{\mathcal{K}}  }   \\
\text { s.t. }   & \sum_{n\in\Omega_k} p_{k,n} \leq \bar{P}_k, \forall k \in \mathcal{K}, \sum_{n=1}^{N} C_{m,n}\leq \bar{C}_m,\forall m,\\
&\label{inte_con}\psi_{m,n}=2^{2C_{m,n}},C_{m,n}\in\{1,2,\cdots\}, \forall m,n.
\end{align}
\end{subequations}
Although $ E^{\text{trans}}_{\mathcal{K}} $ is equivalently transformed into \eqref{phi_first}, the optimization problem \eqref{cran3} is still a non-convex problem with respect to $ \{p_{k,n},\psi_{m,n} \}  $. In the following, we propose a two-stage algorithm to solve problem \eqref{cran3} based on alternating optimization and convex approximation. Specifically, we first fix the transmit power to optimize quantization bits allocation for Cloud-RAN and then fix quantization bits allocation for Cloud-RAN to optimize the remaining variables:
\subsubsection{fix $\{ \psi_{m,n}\} $ to solve $ \{p_{k,n} \}$}

First, by fixing $ \psi_{m,n} $, we optimize the transmit power allocation $ p_{k,n} $ by solving the following problem:
\begin{subequations}
\label{obj_p_all}
\begin{align}
\underset{\{p_{k,n} \}}{\operatorname{min}}\quad  {   E^{\text{trans}}_{\mathcal{K}}   }  \quad
\text { s.t. }  \sum_{n\in\Omega_k} p_{k,n} \leq \bar{P}_k, \forall k \in \mathcal{K}.
\end{align}
\end{subequations}
To deal with problem \eqref{obj_p_all}, we apply fractional programming (FP) \cite{DBLP:journals/tsp/ShenY18}. 
Specifically, we first define 
\begin{equation}\label{A_k}
\begin{aligned}
&A_k(p_{k,n})= \sum_{n\in\Omega_k}     \frac{B}{N} \log_2 \left(   1+ \sum_{m=1}^{M} \frac{ |   h_{m,k,n}  |^2p_{k,n} \psi_{m,n} }{  \sigma_{m,n}^2   \psi_{m,n} +        |   h_{m,k,n}  |^2  p_{k,n} + \sigma_{m,n}^2    }\right)  
\end{aligned}
\end{equation}
and
\begin{equation}\label{B_k}
B_k(p_{k,n})= \frac{\| \bm{d}^{k}_t   \| C_{\text{prec}}} {C_{\text{max}} } \sum_{n\in\Omega_k} p_{k,n}.
\end{equation}
Then, taking the multiplicative inverse of the objective function in problem \eqref{obj_p_all}, we approximate the original energy minimization problem as
\begin{subequations}\label{max_p}
\begin{align}
\underset{\{p_{k,n} \}}{\operatorname{max}}  & \quad \frac{1}{\sum_{k=1}^{K}  \frac{ B_k(p_{k,n})   }{ A_k(p_{k,n})   }  } \quad
\text { s.t. }   \sum_{n\in\Omega_k} p_{k,n} \leq \bar{P}_k, \forall k \in \mathcal{K}.
\end{align}
\end{subequations}
With Cauchy-Schwarz inequality $ \frac{1}{\sum_{i=1}^n \frac{1}{x_i}} \leq \frac{\sum_{i=1}^n x_i}{n} $, we derive that
\begin{equation} \label{cauchyine}
\frac{1}{\sum_{k=1}^{K}  \frac{ B_k(p_{k,n})   }{ A_k(p_{k,n})   }  } \leq \frac{\sum_{k=1}^{K}   \frac{ A_k(p_{k,n})   }{ B_k(p_{k,n})    } }{K}.
\end{equation}
Substituting \eqref{cauchyine} into \eqref{max_p},
\begin{subequations}\label{FP_ori}
\begin{align}
\label{FP_ori_a}\underset{\{p_{k,n} \}}{\operatorname{max}}  \quad  \sum_{k=1}^{K} \frac{   A_k(p_{k,n}) }{B_k(p_{k,n})}  \quad 
\text { s.t. }\sum_{n\in\Omega_k} p_{k,n} \leq \bar{P}_k, \forall k \in \mathcal{K}.
\end{align}
\end{subequations}
Problem \eqref{FP_ori} is a routine fractional programming problem \cite{DBLP:journals/tsp/ShenY18}.
By reformulating \eqref{FP_ori_a} based on the quadratic transform \cite{DBLP:journals/tsp/ShenY18}, we derive the following optimization problem
\begin{subequations}\label{FP_qua}
\begin{align}
\underset{\{p_{k,n}  \}, \{y_k\}}{\operatorname{max}}  & \sum_{k=1}^{K}  \left(     2y_k \sqrt{ A_k(p_{k,n})} - y_k^2 B_k(p_{k,n}) \right) \\
\text { s.t. } & \sum_{n\in\Omega_k} p_{k,n} \leq \bar{P}_k, y_k \in \mathbb{R}, \forall k \in \mathcal{K}.
\end{align}
\end{subequations}
Note that the numerator $   A_k(p_{k,n})$ is a concave function over $ \{p_{k,n}\} $ and the denominator $  B_k(p_{k,n}) $ is a convex function over $ \{p_{k,n} \}$ \cite{DBLP:journals/tcom/LiuBZ15}.
When $ p_{k,n} $ is fixed, the optimal $ y_k $ can be obtained as \cite{DBLP:journals/tsp/ShenY18}
\begin{equation}\label{y_opt}
y_k^\star = \frac{\sqrt{ A_k(p_{k,n})}}{ B_k(p_{k,n})}, \forall k \in \mathcal{K}_t.
\end{equation}
When $ y_k $ is fixed, due to the fact that each $ A_k(p_{k,n}) $ and $ B_k(p_{k,n}) $ are respectively concave and convex, and that the square-root function is concave and increasing, the quadratic transform $2y_k \sqrt{ A_k(p_{k,n})} - y_k^2 B_k(p_{k,n})  $
is concave in $ \{p_{k,n}\} $ for fixed $ y_k $. Therefore, we can solve the problem using CVX toolbox \cite{cvx} \cite{gb08}. The overall algorithm for solving problem \eqref{obj_p_all} is given in Algorithm \ref{algFP}.

\begin{algorithm}[h]
\caption{ Iterative approach for solving problem \eqref{obj_p_all}}\label{algFP}
\LinesNumbered
\KwIn{$ C_{prec},\{ C_{m,n}\} $;}
\Initialization{Set $ \{p_{k,n}\} $ to a feasible value    }\;
Transform problem \eqref{obj_p_all} into a quadratic FP problem \eqref{FP_qua}\;
\Repeat{ convergence}
{
Update $ y_k$ by \eqref{y_opt}\;
Fix $ y_k $ and update $ \{p_{k,n}\} $ by solving the reformulated problem \eqref{FP_qua} over $ \{p_{k,n}\} $\;
}
\KwOut{  $ \{p_{k,n}\}$ }
\end{algorithm}

\subsubsection{fix $ p_{k,n} $ to solve $ \psi _{m,n}$}
\begin{algorithm}[h]
\caption{ Iterative approach for solving problem \eqref{phi_all}}\label{alg_phi}
\LinesNumbered
\KwIn{$ C_{prec},\{ p_{k,n}\} $;}
\Initialization{Set $ \{\tilde{\psi}_{m,n}\} $ to a feasible value\;    }
\Repeat{ convergence}
{
Fix $  \{\tilde{\psi}_{m,n}\}  $ and update $  \{{\psi}_{m,n}\}  $ by solving problem \eqref{phi_convex}\;
$    \tilde{\psi}_{m,n}=   {\psi}_{m,n}, \forall m,n  $\;
}
Round each $ C_{m,n} $ by \eqref{roundC} based on $ \tilde{\psi}_{m,n} $\;
\KwOut{  $ \{C_{m,n}\}$ }
\end{algorithm}
\begin{algorithm}[h]
\caption{ Overall algorithm for solving problem \eqref{27}}\label{overall_alg}
\LinesNumbered
\Initialization{Set $ C_{prec}^{(0)},\{ p_{k,n}^{(0)} \} , \{ {C}_{m,n}^{(0)} \} $ to a feasible value and $ j = 0 $\;    }
\Repeat{ convergence}
{
$ j = j + 1 $\;
Solve sub-problem \eqref{obj_C_prec} by a line search method to obtain $ C_{\text{prec}}^{(j)}  $ with fixed $\{ p_{k,n}^{(j-1)}\} , \{{C}_{m,n}^{(j-1)}\}   $\;
Solve sub-problem \eqref{obj_p_all} by transforming the original problem into a FP one and using Algorithm \ref{algFP} to obtain $ \{ p_{k,n}^{(j)}\} $ with fixed $  C_{\text{prec}}^{(j)}, \{C_{m,n}^{(j-1)}\}$\;
Solve sub-problem \eqref{phi_all} by using Algorithm \ref{alg_phi} to obtain $ \{ C_{m,n}^{(j)}\} $ with fixed $C_{\text{prec}}^{(j)}, \{p_{k,n}^{(j)}\}  $\;
}
\KwOut{  $ \{C_{m,n}^{(j)}\}, \{p_{k,n}^{(j)}\}, C_{\text{prec}}^{(j)}$. }
\end{algorithm}

Here, we aim to obtain local optimal solution $\{ \psi _{m,n}\}$ with fixed $ \{p_{k,n} \} $.
First, we omit the integer requirement in constraint \eqref{inte_con} and adopt an alternating optimization based algorithm to solve
\begin{subequations}
\label{phi_all}
\begin{align}
\underset{\{\psi_{m,n}\} }{\operatorname{min}}\quad    {  E^{\text{trans}}_{\mathcal{K}}   }  \quad \text { s.t. }  \sum_{n=1}^{N} \frac{1}{2} \log_2(\psi_{m,n})\leq \bar{C}_m,\forall m.
\end{align}
\end{subequations}
The objective function $ E^{\text{trans}} _{\mathcal{K}} $ in \eqref{phi_all} can be considered as the sum of two parts, i.e.,
\begin{equation}\label{sum_e}
\begin{aligned}
E^{\text{trans}}_{\mathcal{K}} = E^{\text{device}} + E^{\text{RRH}},
\end{aligned}
\end{equation}
where
\begin{equation}
\begin{aligned}\label{e_users}
&E^{\text{device}} = \sum_{k=1}^{K}   \frac{  \| \bm{d}^{k}_t   \| C_{\text{prec}} p_k /C_{\text{max}}  }{ \sum_{n\in\Omega_k}     \frac{B}{N} \log_2 \left(   1+ \sum_{m=1}^{M} \frac{ |   h_{m,k,n}  |^2p_{k,n} \psi_{m,n} }{  \sigma_{m,n}^2   \psi_{m,n} +        |   h_{m,k,n}  |^2  p_{k,n} + \sigma_{m,n}^2    }\right)   },
\end{aligned}
\end{equation}
and 
\begin{equation}
\begin{aligned}\label{e_rrhs}
&E^{\text{RRH}}=
& \sum_{m=1}^{M}\sum_{k=1}^{K}  {  \| \bm{d}^{k}_t   \| C_{\text{prec}}  P_{fl, m}  /C_{\text{max}}   \sum_{n\in\Omega_k}\log_2 \psi_{m,n}  }.
\end{aligned}
\end{equation}
$  E^{\text{device}} $ and $  E^{\text{RRH}} $ denote the energy consumed by devices and RRHs, respectively.
It can be shown that $ E^{\text{device}}$ is convex over $\{ \psi_{m,n}\} $ in \eqref{e_users}, $ \forall m, n $, while $E^{\text{RRH}}  $ is a non-convex function over $  \{\psi_{m,n} \}$ in \eqref{e_rrhs}. As a result, we apply the first-order approximation as an upper bound to $ E^{\text{RRH}} $, which is shown in \eqref{term2upper}.
\begin{equation}\label{term2upper}
\begin{aligned}
&E^{\text{RRH}} \leq { \sum_{m=1}^{M}\sum_{k=1}^{K}  {   \frac{\| \bm{d}^{k}_t   \| C_{\text{prec}}  P_{fl, m}   }{C_{\text{max}}  } \sum_{n\in\Omega_k}  \left(\log_2 (\tilde{\psi}_{m,n})   +  \frac{\psi_{m,n}- \tilde{\psi}_{m,n}}{  \tilde{\psi}_{m,n} \ln2}\right)}} \triangleq \hat{E}^{\text{RRH}} .
\end{aligned}
\end{equation}
Based on the upper bound in \eqref{term2upper}, we transform the non-convex objective function \eqref{sum_e} into a convex one. Similar approach can be applied to convexify the fronthaul link capacity constraints. Specifically, since $ C_m $ is concave over $ \psi_{m,n} $, its first-order approximation serves as an upper bound,
i.e.,
\begin{equation}
\begin{aligned}
C_m&= \sum_{n=1}^{N} \frac{1}{2}\log_2(\psi_{m,n}) \leq \sum_{n=1}^{N}\frac{1}{2} \left(\log_2 (\tilde{\psi}_{m,n})   +  \frac{\psi_{m,n}- \tilde{\psi}_{m,n}}{  \tilde{\psi}_{m,n} \ln2}\right),m=1,\cdots,M.
\end{aligned}
\end{equation}
Therefore, based on the above approximation, we obtain the following convex problem:
\begin{equation}
\label{phi_convex}
\begin{aligned}
\underset{\{\psi_{m,n}\} }{\operatorname{min}}   {  \hat{E}^{\text{trans}} _{\mathcal{K}} }  
\text { s.t. }   \sum_{n=1}^{N}\frac{1}{2} \left(\log_2 (\tilde{\psi}_{m,n})   +  \frac{\psi_{m,n}- \tilde{\psi}_{m,n}}{  \tilde{\psi}_{m,n} \ln2}\right) \leq \bar{C}_m,\forall m,
\end{aligned}
\end{equation}
where $\hat{E}^{\text{trans}} _{\mathcal{K}} \triangleq  E^{\text{device}} + \hat{E}^{\text{RRH}} $ is a convex function over $ \{\psi_{m,n}\} $, and $  E^{\text{user}}$ and $    \hat{E}^{\text{RRH}} $ are defined in \eqref{e_users} and \eqref{term2upper}, respectively. Thus, we can solve the problem using CVX toolbox \cite{cvx} \cite{gb08}.

Next, we deal with the integer constraint in \eqref{inte_con}. Based on the approximate solution $ \tilde{\psi}_{m,n} $, we aims to find a feasible solution to $\psi_{m,n} $ such that $ C_{m,n} = \frac{1}{2}\log_2 \psi_{m,n} $ is an integer. In the following, for any given $ m $, we round each $  C_{m,n}  ,\forall m,n$ to its nearby integer as follows \cite{DBLP:journals/tcom/LiuBZ15}:

\begin{equation}\label{roundC}
\begin{aligned}
&\frac{1}{2}\log_2\psi_{{m},n} =\left\{
\begin{aligned}
\left\lfloor \frac{1}{2}\log_2 \tilde{\psi}_{{m},n}\right\rfloor & ,  \text{if }   \frac{1}{2}\log_2 \tilde{\psi}_{{m},n}-\left\lfloor \frac{1}{2}\log_2 \tilde{\psi}_{{m},n}\right\rfloor \leq \beta_m, \\
\left\lceil  \frac{1}{2}\log_2 \tilde{\psi}_{{m},n} \right\rceil  & , \text{otherwise},
\end{aligned}
\right.
\end{aligned}
\end{equation}
where $ 0 \leq \beta_m \leq 1, \forall m$. By searching $ {m} $ from $ 1 $ to $ M $, the overall feasible solution $ \{ C_{m,n}\} $ is obtained. We present the iterative method to solve problem \eqref{phi_all} in Algorithm \ref{alg_phi}. The overall iterative approach for solving problem \eqref{27} is given in Algorithm \ref{overall_alg}.

\section{Simulation Results }\label{results}
In this section, we conduct experiments to evaluate the performance of the proposed schemes and demonstrate the advantages of our proposed optimization approach.
Specifically, we uniformly deploy devices in a communication system of $ K=16 $ devices and $ M =5$ RRHs randomly distributed in a circle area of radius $ R=500 $ m. We set the bandwidth of wireless channel to  $ B=300  $ MHz where the bandwidth is equally divided into $ N=64 $ SCs and each user is pre-allocated $ N/K =4$ SCs. Note that we assume that the capacities of all fronthaul links are identical, i.e., $ \bar{G}_m = \bar{G} , \forall m$. Rayleigh fading is considered for all channels and the pass loss model is formulated as 
$L(d) = T_0\left(\frac{d}{d_0}\right)^{-\alpha}$,
where $ T_0 $ denotes the reference pathloss corresponding to $ d_0=1 $ m, $ \alpha $ denotes the pathloss exponent and $ d $ denotes the distance of the wireless link. In the following experiments, we set $ T_0 = 30 $ dB and $ \alpha = 3 $ for the wireless link between RRHs and devices. The maximum transmit power at each device is set to be $ 23 $ dBm, i.e., $ \bar{P}_k = \bar{P}, \forall k \in\mathcal{K} $. For other parameters, we set $ \bar{K} = 10 $, $ C_\text{max} = 32 $ bits, $ \sigma_k=1 $, $ I= 5 $, $ \beta=2/\mu $, $\epsilon=0.01  $, $ L=1 $, $ \mu= 0.89 $, $ \gamma=1 $, $ G=0.02 , \forall k \in \mathcal{K}, \beta_m = 0.5, \forall m \in \mathcal{M}$. In terms of the computing model, devices are assumed to have the same architecture of the processing chip, and we set $ A = 3.7 $ pJ, $ \alpha=1.25 $ and $ u =  64$ \cite{7870353}. A QNN structure with two convolutional layers is considered: 32 kernels of size $ 2\times2 $ with one padding and two of strides, and kernels of size $ 2\times 2 $ with one padding and two of strides, each followed by $ 2 \times 2 $ max pooling, two dense layers of $  2000$ neurons, and one fully-connected output layer of $ 10 $ neurons with local batch size $ 1 $. Thus, we have $ N_s= 0.28\times 10^6 $, $  N_c = 0.37\times 10^6$ and $ O_s = 2266 $. Note that $ d $ equals to the number of parameters $ N_s $. 
All statistic results are averaged over $ 1000 $ independent runs. 
\begin{figure}[htb]
\centering
\centerline{\includegraphics[scale=0.4]{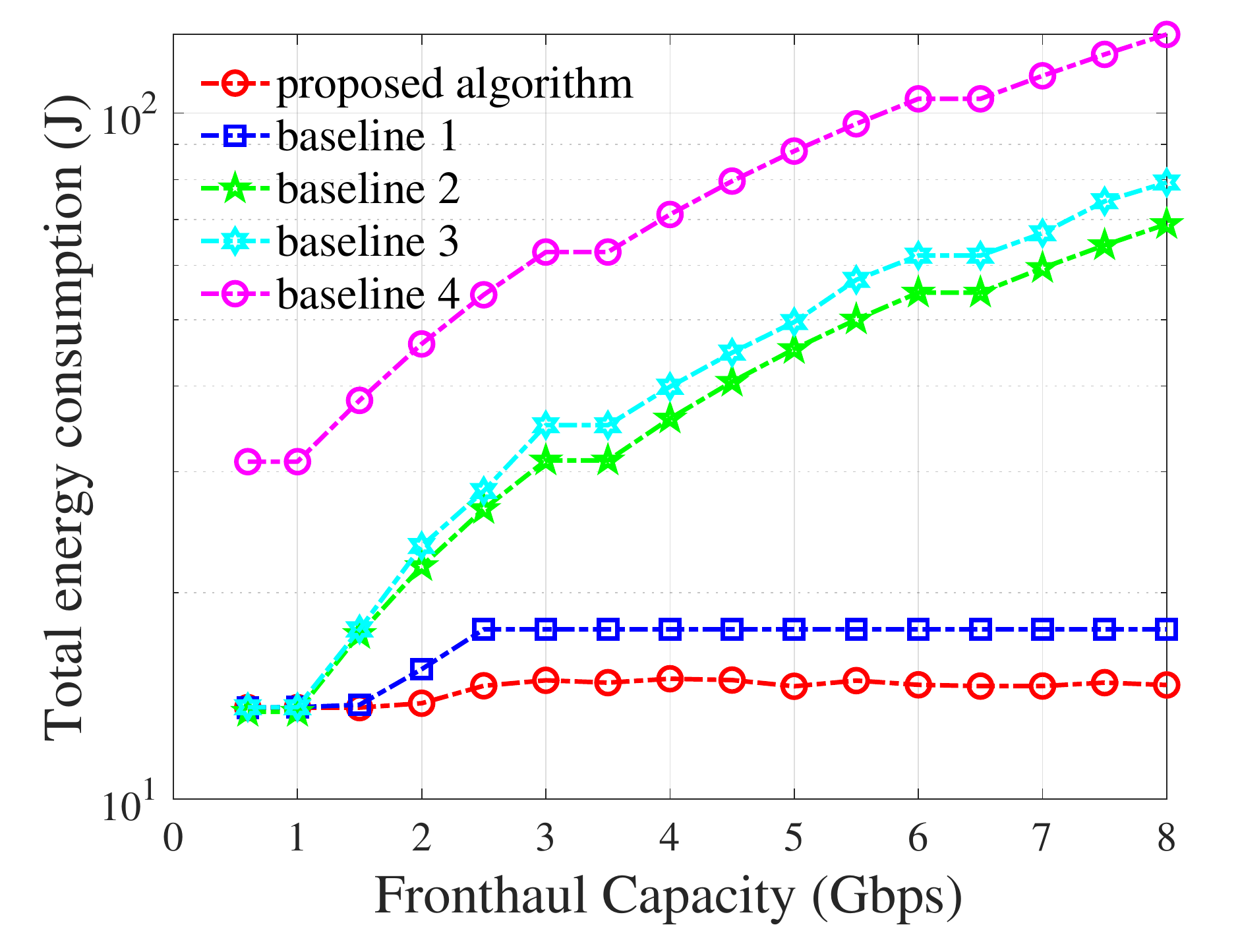}}
\caption{Energy power consumption due to joint optimization of user power allocation, fronthaul rate allocation and level of precision of QNN.}
\label{fig:methods}
\end{figure}

\begin{figure*}[htbp!]
\centering
\begin{minipage}{0.32\linewidth}
\centering
\includegraphics[width=\linewidth]{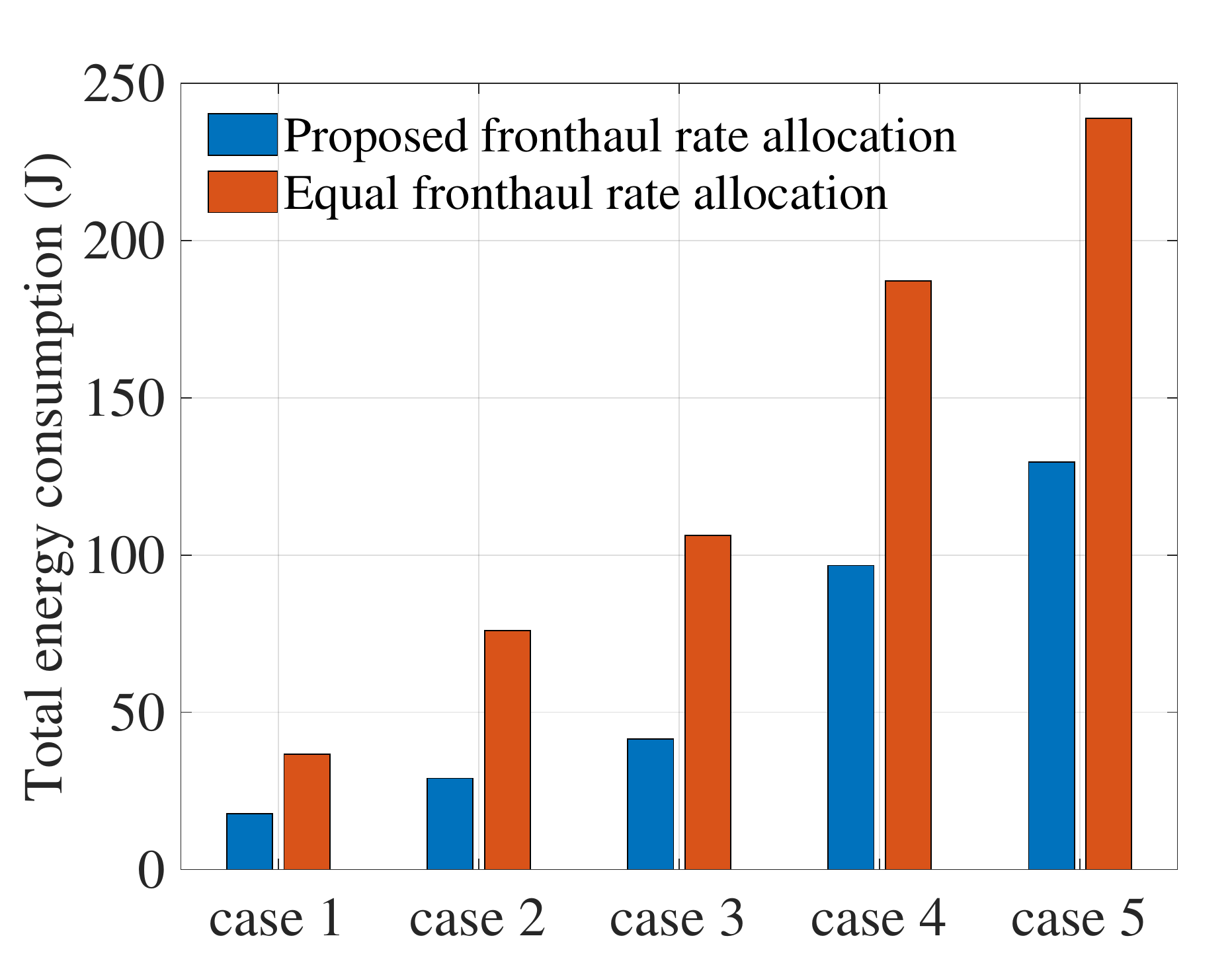}
\caption{Transmit energy consumption for various user power budgets.}
\label{fig:userpower}
\end{minipage}
\begin{minipage}{0.32\linewidth}
\centering
\includegraphics[width=\linewidth]{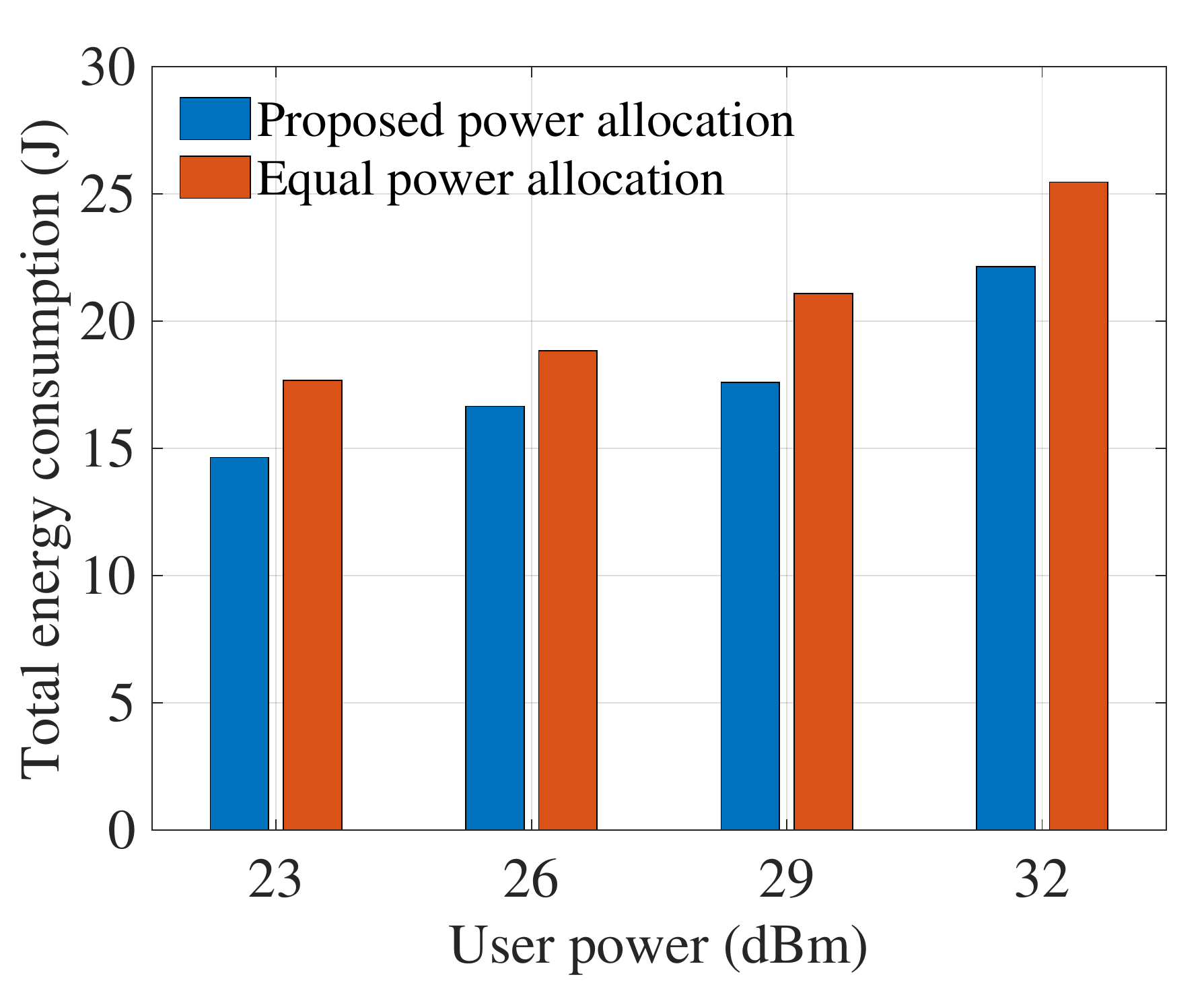}
\caption{Transmit energy consumption for various fronthual rate budgets.}
\label{fig:c}
\end{minipage}
\begin{minipage}{0.32\linewidth}
\centering
\includegraphics[width=\linewidth]{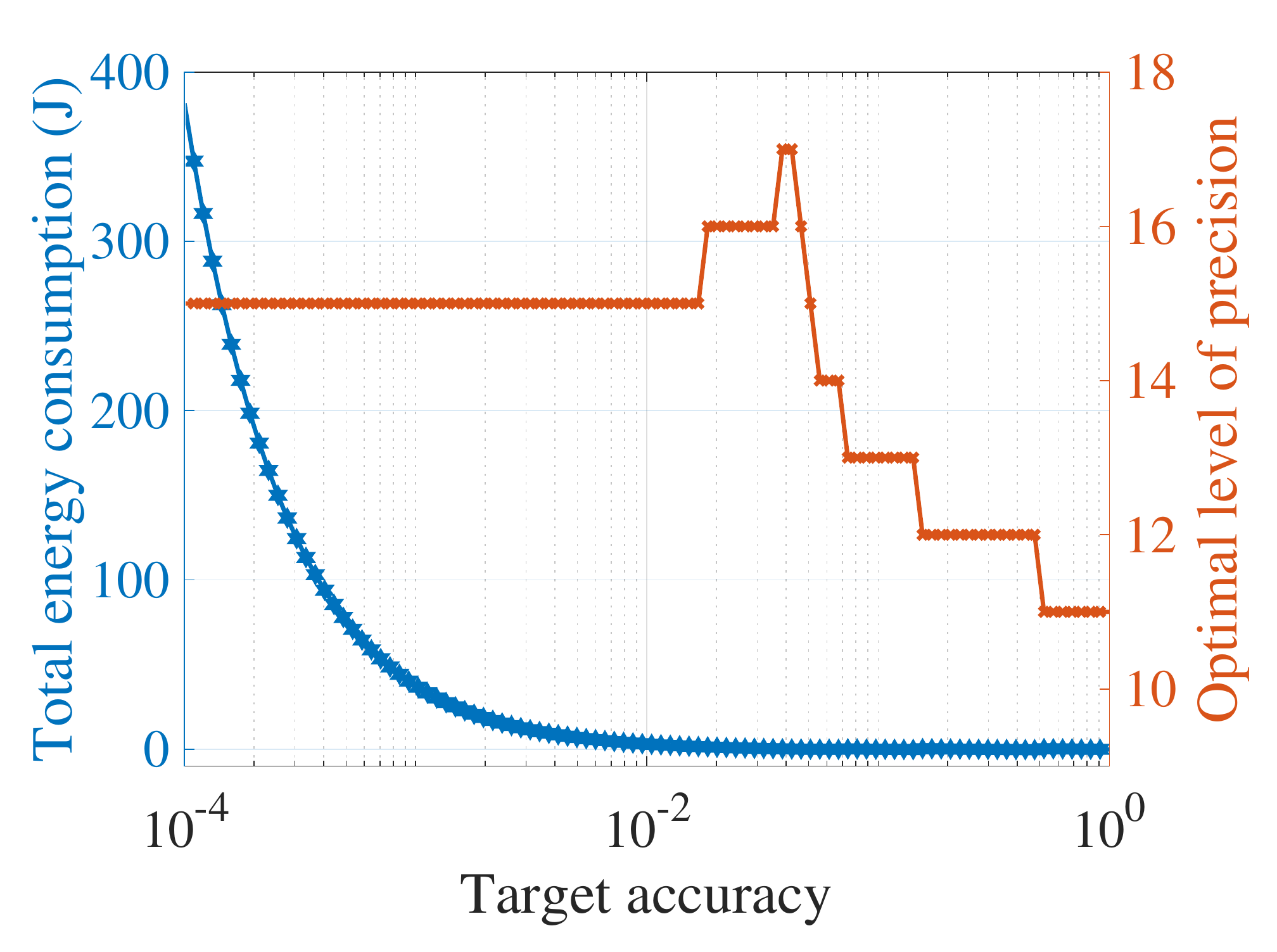}
\caption{Transmit energy consumption for different target accuracies.}
\label{fig:acc}
\end{minipage}
\end{figure*}
To illustrate the gain from joint optimization of wireless power, fronthaul rate allocation and level of precision of QNN, we consider the following four baseline schemes: 
\begin{itemize}
\item \textbf{ Baseline 1: Equal power with optimized fronthaul rate and level of precision.} In this scheme, the user allocates its transmit power equally among SCs, i,e., $ p_{n} = \bar{P }/ N$. Then, with equal power allocation, we optimize the fronthaul rate allocation at the RRHs and the level of precision of QNN \cite{DBLP:journals/corr/abs-2111-07911} so as to minimize the total energy consumption.
\item \textbf{ Baseline 2: Equal fronthaul rate with optimized power and level of precision.} In this scheme, the RRH equally allocates its fronthaul rate among SCs, $ G_{m, n} =  \bar{G} / N $. Then, with equal fronthaul rate allocation, we optimize the transmit power of the devices and the precision level of QNN \cite{DBLP:journals/corr/abs-2111-07911} in order to minimize the total energy consumption.
\item \textbf{ Baseline 3: Equal power and fronthaul rate with optimized level of precision.} In this scheme, the user allocates its transmit power equally among SCs, and each RRH equally allocates its fronthaul link bandwidth among SCs. Based on the given equal power allocation and fronthaul rate allocation, we optimize precision level of QNN \cite{DBLP:journals/corr/abs-2111-07911} to minimize the total energy consumption.
\item \textbf{ Baseline 4: Equal power and fronthaul rate with level of precision 31 bit.} In this scheme, the user allocates its transmit power equally among SCs, each RRH equally allocates its fronthaul link bandwidth among SCs, and the level of precision of QNN is set to be $ 31 $ bits.
\end{itemize}

In Fig. \ref{fig:methods}, we evaluate  the performance of the proposed optimization scheme against that of other baseline schemes. Fig. \ref{fig:methods} shows the total consumption of the FL system until convergence for varying optimization schemes versus total fronthaul capacity $  \bar{G} $. It is observed that compared with baseline 1-4 where only either wireless power, fronthaul rate allocation or level of precision is optimized, the joint optimization algorithm proposed in Section \ref{solution} achieves a much lower energy consumption, especially, when the fronthaul link capacity is small. Furthermore, it is observed from baseline 1 and the proposed algorithm that when the fronthaul link capacity $  \bar{G} $ is large, power allocation optimization plays the dominant role in improving the total energy consumption, while when $  \bar{G} $ is small, the performance of the two methods is similar. We can also observe the similar tendency from baseline 2 and 3. Furthermore, when $  \bar{G} $ is sufficiently large, fronthaul rate optimization has a stronger impact to the performance of energy consumption than power allocation optimization. Since the influence of user transmit power on the total energy consumption becomes weak when $  \bar{G} $ is sufficiently large. The scheme with fronthaul rate optimization can achieve a better performance. Fig. \ref{fig:methods} also presents that schemes with the precision level optimization outperforms baseline 4. This indicates that the level of precision optimization approach effectively improves the energy consumption of the whole system.


In Fig. \ref{fig:userpower}, we investigate the impact of user power constraint on the performance of the energy consumption. The maximum user transmit power budget is set to be $ 23 , 25, 29, 32$ dBm. Fig. \ref{fig:userpower} compares the total energy consumption until convergence for varying the user power budget $ \bar{P} $. We observe that the proposed power allocation optimization outperforms equal power allocation. In particular, for the case $ \bar{P} = 23 $ dBm, we can reduce the total energy consumption up to 20\% compared to the equal power allocation scheme due to the effectiveness of the proposed transmit power optimization algorithm.


In Fig. \ref{fig:c}, we compare the total energy consumption until convergence for different sizes of QNN models. The batchsize is set to be $ 16 $ for Case 1-5. Case 1 considers the same model structure as mentioned before and we have $ N_s=0.28\times 10^6 , N_c = 11.85\times 10^6$ and $ O_s = 2266 $ with batchsize $ =16 $. 
Case 2 considers a three-layer CNN, where the first two convolutional layer settings are the same as Case $ 1 $, the third convolutional layer is set to be $ 32 $ kernels of size $ 2\times 2 $ with one padding and one stride, and the dense layer and the fully-connected output layer are set to be the same as Case 1. For Case 2, we have $N_s = 1.05\times  10^6   $, $ N_c = 39.70 \times 10^6 $ and $ O_s = 2394 $. 
For Case 3, we add an additional dense layer with $ 1000 $ neurons where $N_s = 2.08\times  10^6   $, $ N_c = 70.56 \times 10^6 $ and $ O_s = 3394 $. 
For Case $ 4 $, we add an additional dense layer with $ 2000 $ neurons to Case $ 3 $, where $N_s = 6.08\times  10^6   $, $ N_c = 198.56 \times 10^6 $ and $ O_s = 5368 $. 
For Case $ 5$, we add an additional dense layer with $ 1000 $ neurons to Case $ 4 $, where $ N_s = 7.08\times 10^6  $ and $ N_c = 230.56 \times 10^6 $ and $ O_s =  6368$.
It can be observed that our proposed joint optimization scheme is effective for CNN models with a large model size. It is capable of decreasing the total energy consumption up to 50\% compared to the equal fronthaul rate allocation for Case 1-5.

In Fig. \ref{fig:acc}, we illustrate the relationship between the total energy consumption and the optimal level of precision via varying the target accuracy. The same CNN model as mentioned in Fig. \ref{fig:methods} is used. It can be observed that the optimal level of precision increases and then decreases with target accuracy. There is a tradeoff between FL performance with target accuracy and the optimal precision level. To achieve high model performance, i.e., lower target accuracy $ 10^{-4} $, a relatively high level of precision is needed to maintain more model updates information and mitigate the quantization error. Meanwhile, as the target accuracy becomes looser, i.e., $4\times 10^{-2}  $, the optimal level of precision increases because with higher level of precision, i.e., $ 17 $ bits, the global communication rounds can be decreased and much lower communication energy consumption can be achieved. Lower total energy consumption can be obtained with a relatively high level of precision. Moreover, as the target accuracy increases, a lower level of precision can be chosen and a fewer number of global iterations are needed to achieve the target accuracy. Hence, an optimal level of precision can be effectively achieved by our proposed joint optimization algorithm. 
\begin{figure}[htb]
\centering
\centerline{\includegraphics[scale=0.4]{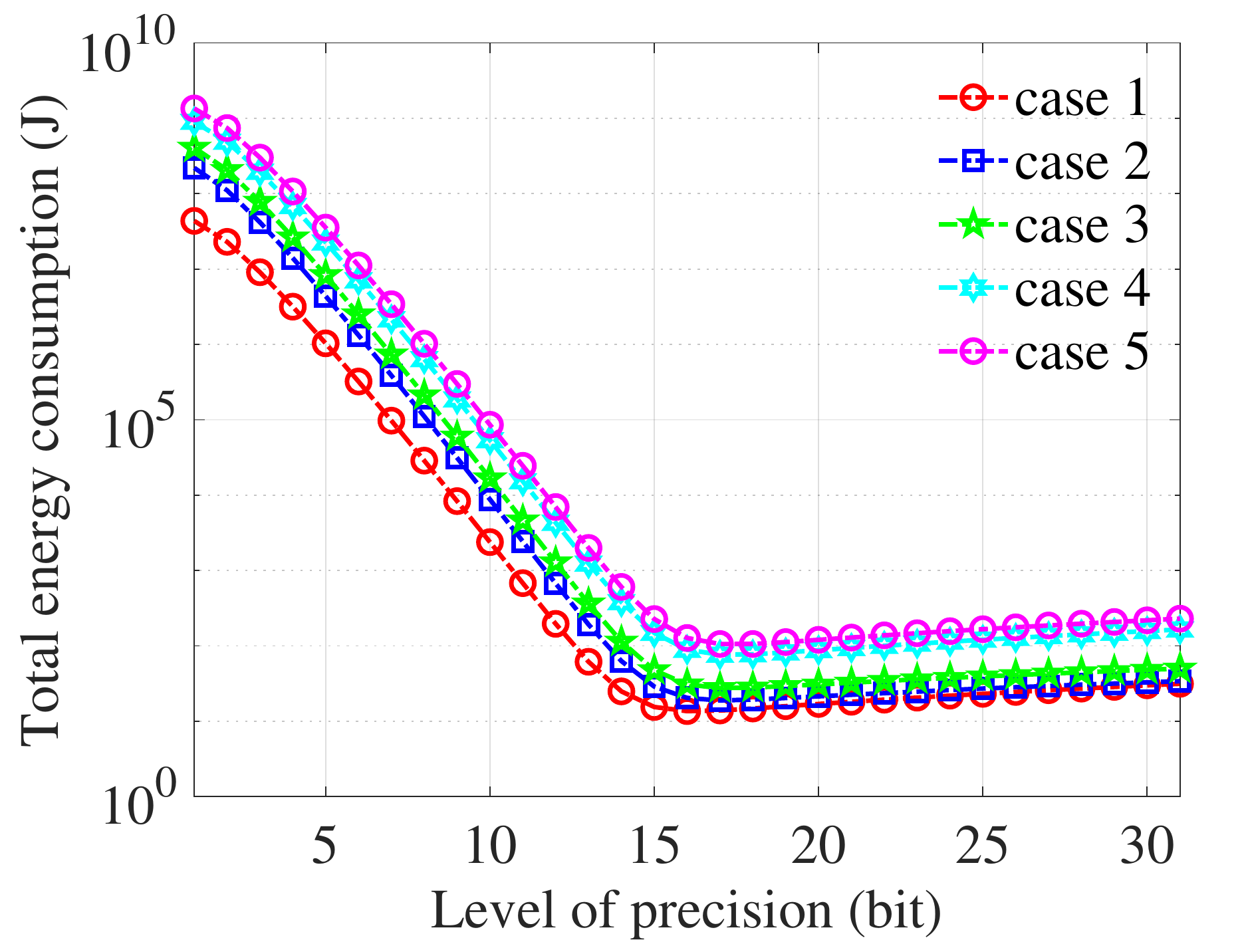}}
\caption{Transmit energy consumption for different level of precisions.}
\label{fig:c_prec_en}
\end{figure}
In Fig. \ref{fig:c_prec_en}, we illustrate the total energy consumption versus level of precision for the aforementioned five schemes. The model size is the same as in Fig. \ref{fig:c}. We can observe that as the model size increases, the total energy consumption also increases. Furthermore, when the level of precision becomes larger, the total energy consumption decreases promptly and then increases slowly. Thus, there is a tradeoff between level of precision and the total energy consumption. A very small or very high level of precision may induce an undesired large quantization error or unnecessary energy consumption. We observe that the corresponding optimal level of precision for Case 1-5 are $15,16,16,17,17  $ as illustrated in Fig. \ref{fig:c_prec_en}. 

Based on the aforementioned simulations, we conclude that our proposed framework requires much less energy than a standard FL model. We discover the tradeoff exists between the energy consumption and the convergence accuracy with respect to user transmit power, fronthaul rate, as well as the precision level. In particular, a very small level of precision may induce an undesired quantization error leading to much more rounds of iterations, while a very large energy level of precision may also bring extra energy consumption. The proposed framework demonstrates significant performance benefits thanks to the joint design of the precision level, the fronthaul rate, and user transmit power. 

\section{conclusion} \label{conclusion} 
This work presented an effective and practical way to implement FL on energy-constrained devices, by introducing QNNs to simplify the computation complexity, as well as OFDMA-based Cloud-RAN to improve the communication performance. To this end, a rigorous energy consumption model for computing energy consumption of QNNs and communication energy consumption over Cloud-RAN was introduced. Based on the rigorous convergence analysis, we then formulated and solved an energy minimization problem to jointly optimize the power allocation, fronthaul rate allocation, and precision levels of QNNs subject to a target accuracy and limited fronthaul link capacity. Simulation results have shown that our proposed algorithm can significantly reduce the energy consumption while balancing the tradeoff between energy efficiency and convergence rate.
\bibliographystyle{IEEEtran}
\bibliography{main} 

\begin{thebibliography}{10}
\providecommand{\url}[1]{#1}
\csname url@samestyle\endcsname
\providecommand{\newblock}{\relax}
\providecommand{\bibinfo}[2]{#2}
\providecommand{\BIBentrySTDinterwordspacing}{\spaceskip=0pt\relax}
\providecommand{\BIBentryALTinterwordstretchfactor}{4}
\providecommand{\BIBentryALTinterwordspacing}{\spaceskip=\fontdimen2\font plus
\BIBentryALTinterwordstretchfactor\fontdimen3\font minus
  \fontdimen4\font\relax}
\providecommand{\BIBforeignlanguage}[2]{{%
\expandafter\ifx\csname l@#1\endcsname\relax
\typeout{** WARNING: IEEEtran.bst: No hyphenation pattern has been}%
\typeout{** loaded for the language `#1'. Using the pattern for}%
\typeout{** the default language instead.}%
\else
\language=\csname l@#1\endcsname
\fi
#2}}
\providecommand{\BIBdecl}{\relax}
\BIBdecl

\bibitem{10.1145/3298981}
Q.~Yang, Y.~Liu, T.~Chen, and Y.~Tong, ``Federated machine learning: Concept
  and applications,'' \emph{ACM Transactions on Intelligent Systems and
  Technology (TIST)}, vol.~10, no.~2, pp. 1--19, 2019.

\bibitem{9606720}
K.~B. Letaief, Y.~Shi, J.~Lu, and J.~Lu, ``Edge artificial intelligence for 6g:
  Vision, enabling technologies, and applications,'' \emph{IEEE J. Sel. Areas
  Commun.}, vol.~40, no.~1, pp. 5--36, 2022.

\bibitem{DBLP:journals/cm/NiknamDR20}
S.~Niknam, H.~S. Dhillon, and J.~H. Reed, ``Federated learning for wireless
  communications: Motivation, opportunities, and challenges,'' \emph{{IEEE}
  Commun. Mag.}, vol.~58, no.~6, pp. 46--51, 2020.

\bibitem{DBLP:journals/corr/KonecnyMR15}
J.~Kone{\v{c}}n{\'y}, B.~McMahan, and D.~Ramage, ``Federated optimization:
  Distributed optimization beyond the datacenter,'' \emph{arXiv preprint
  arXiv:1511.03575}, 2015.

\bibitem{shi2023taskoriented}
Y.~Shi, Y.~Zhou, D.~Wen, Y.~Wu, C.~Jiang, and K.~B. Letaief, ``Task-oriented
  communications for 6g: Vision, principles, and technologies,'' \emph{arXiv
  preprint arXiv:2303.10920}, 2023.

\bibitem{9502547}
Z.~Wang, J.~Qiu, Y.~Zhou, Y.~Shi, L.~Fu, W.~Chen, and K.~B. Letaief,
  ``Federated learning via intelligent reflecting surface,'' \emph{IEEE Trans
  on Wireless Commun.}, vol.~21, no.~2, pp. 808--822, 2022.

\bibitem{9134426}
Y.~Shi, K.~Yang, T.~Jiang, J.~Zhang, and K.~B. Letaief,
  ``Communication-efficient edge ai: Algorithms and systems,'' \emph{IEEE
  Commun. Surveys Tuts.}, vol.~22, no.~4, pp. 2167--2191, 2020.

\bibitem{he2023reconfigurable}
J.~He, Y.~Mao, Y.~Zhou, T.~Wang, and Y.~Shi, ``Reconfigurable intelligent
  surfaces empowered green wireless networks with user admission control,''
  \emph{arXiv preprint arXiv:2206.07987}, 2023.

\bibitem{9810113}
P.~Yang, Y.~Jiang, T.~Wang, Y.~Zhou, Y.~Shi, and C.~N. Jones, ``Over-the-air
  federated learning via second-order optimization,'' \emph{IEEE Trans. on
  Wireless Commun.}, vol.~21, no.~12, pp. 10\,560--10\,575, 2022.

\bibitem{9042352}
M.~Mohammadi~Amiri and D.~Gündüz, ``Machine learning at the wireless edge:
  Distributed stochastic gradient descent over-the-air,'' \emph{IEEE Trans. on
  Signal Processing}, vol.~68, pp. 2155--2169, 2020.

\bibitem{9014530}
M.~M. Amiri and D.~Gündüz, ``Federated learning over wireless fading
  channels,'' \emph{IEEE Trans. on Wireless Commun.}, vol.~19, no.~5, pp.
  3546--3557, 2020.

\bibitem{9912341}
X.~Fan, Y.~Wang, Y.~Huo, and Z.~Tian, ``1-bit compressive sensing for efficient
  federated learning over the air,'' \emph{IEEE Trans. on Wireless Commun.},
  pp. 1--1, 2022.

\bibitem{9562516}
D.~Fan, X.~Yuan, and Y.-J.~A. Zhang, ``Temporal-structure-assisted gradient
  aggregation for over-the-air federated edge learning,'' \emph{IEEE J. Sel.
  Areas Commun.}, vol.~39, no.~12, pp. 3757--3771, 2021.

\bibitem{9269459}
Y.-S. Jeon, M.~M. Amiri, J.~Li, and H.~V. Poor, ``A compressive sensing
  approach for federated learning over massive mimo communication systems,''
  \emph{IEEE Trans. on Wireless Commun.}, vol.~20, no.~3, pp. 1990--2004, 2021.

\bibitem{9878377}
Y.~Oh, N.~Lee, Y.-S. Jeon, and H.~Vincent~Poor, ``Communication-efficient
  federated learning via quantized compressed sensing,'' \emph{IEEE Trans. on
  Wireless Commun.}, pp. 1--1, 2022.

\bibitem{10.1007/3-540-45465-9_59}
M.~Charikar, K.~Chen, and M.~Farach-Colton, ``Finding frequent items in data
  streams,'' in \emph{Automata, Languages and Programming}, P.~Widmayer,
  S.~Eidenbenz, F.~Triguero, R.~Morales, R.~Conejo, and M.~Hennessy, Eds.\hskip
  1em plus 0.5em minus 0.4em\relax Springer Berlin Heidelberg, 2002, pp.
  693--703.

\bibitem{pmlr-v97-spring19a}
R.~Spring, A.~Kyrillidis, V.~Mohan, and A.~Shrivastava, ``Compressing gradient
  optimizers via count-sketches,'' in \emph{Proc. of International Conference
  on Machine Learning}, vol.~97.\hskip 1em plus 0.5em minus 0.4em\relax PMLR,
  2019, pp. 5946--5955.

\bibitem{pmlr-v119-rothchild20a}
D.~Rothchild, A.~Panda, E.~Ullah, N.~Ivkin, I.~Stoica, V.~Braverman,
  J.~Gonzalez, and R.~Arora, ``{F}etch{SGD}: Communication-efficient federated
  learning with sketching,'' in \emph{Proc. of International Conference on
  Machine Learning}, vol. 119.\hskip 1em plus 0.5em minus 0.4em\relax PMLR,
  2020, pp. 8253--8265.

\bibitem{9807354}
S.~Savazzi, V.~Rampa, S.~Kianoush, and M.~Bennis, ``An energy and carbon
  footprint analysis of distributed and federated learning,'' \emph{{IEEE}
  Trans. Green Commun. Netw.}, pp. 1--1, 2022.

\bibitem{9673130}
R.~Jin, X.~He, and H.~Dai, ``Communication efficient federated learning with
  energy awareness over wireless networks,'' \emph{IEEE Trans. on Wireless
  Commun.}, vol.~21, no.~7, pp. 5204--5219, 2022.

\bibitem{9264742}
Z.~Yang, M.~Chen, W.~Saad, C.~S. Hong, and M.~Shikh-Bahaei, ``Energy efficient
  federated learning over wireless communication networks,'' \emph{IEEE Trans.
  on Wireless Commun.}, vol.~20, no.~3, pp. 1935--1949, 2021.

\bibitem{9953187}
T.~T. Vu, H.~Q. Ngo, M.~N. Dao, D.~T. Ngo, E.~G. Larsson, and T.~Le-Ngoc,
  ``Energy-efficient massive mimo for federated learning: Transmission designs
  and resource allocations,'' \emph{IEEE Open J. of the Commun. Soc.}, vol.~3,
  pp. 2329--2346, 2022.

\bibitem{10001623}
L.~Lei, Y.~Yuan, Y.~Yang, Y.~Luo, L.~Pu, and S.~Chatzinotas, ``Sparsification
  and optimization for energy-efficient federated learning in wireless edge
  networks,'' in \emph{GLOBECOM 2022 - 2022 IEEE Global Communications
  Conference}, 2022, pp. 3071--3076.

\bibitem{YIN2023}
L.~Yin, S.~Lin, Z.~Sun, R.~Li, Y.~He, and Z.~Hao, ``A game-theoretic approach
  for federated learning: A trade-off among privacy, accuracy and energy,''
  \emph{Digital Communications and Networks}, 2023.

\bibitem{DBLP:journals/monet/HuHY22}
Y.~Hu, H.~Huang, and N.~Yu, ``Resource optimization and device scheduling for
  flexible federated edge learning with tradeoff between energy consumption and
  model performance,'' \emph{Mob. Networks Appl.}, vol.~27, no.~5, pp.
  2118--2137, 2022.

\bibitem{DBLP:journals/corr/abs-2111-07911}
M.~Kim, W.~Saad, M.~Mozaffari, and M.~Debbah, ``On the tradeoff between energy,
  precision, and accuracy in federated quantized neural networks,'' in
  \emph{{IEEE} International Conference on Communications {ICC}}.\hskip 1em
  plus 0.5em minus 0.4em\relax {IEEE}, 2022, pp. 2194--2199.

\bibitem{DBLP:journals/corr/abs-2207-09387}
------, ``Green, quantized federated learning over wireless networks: An
  energy-efficient design,'' \emph{arXiv preprint arXiv:2207.09387}, 2022.

\bibitem{6786060}
Y.~Shi, J.~Zhang, and K.~B. Letaief, ``Group sparse beamforming for green
  cloud-ran,'' \emph{IEEE Transactions on Wireless Communications}, vol.~13,
  no.~5, pp. 2809--2823, 2014.

\bibitem{DBLP:journals/tcom/LiuBZ15}
L.~Liu, S.~Bi, and R.~Zhang, ``Joint power control and fronthaul rate
  allocation for throughput maximization in ofdma-based cloud radio access
  network,'' \emph{{IEEE} Trans. Commun.}, vol.~63, no.~11, pp. 4097--4110,
  2015.

\bibitem{stephen2017joint}
R.~G. Stephen and R.~Zhang, ``Joint millimeter-wave fronthaul and ofdma
  resource allocation in ultra-dense cran,'' \emph{{IEEE} Trans. Commun.},
  vol.~65, no.~3, pp. 1411--1423, 2017.

\bibitem{mcmahan2017communication}
B.~McMahan, E.~Moore, D.~Ramage, S.~Hampson, and B.~A. y~Arcas,
  ``Communication-efficient learning of deep networks from decentralized
  data,'' in \emph{Artificial intelligence and statistics}.\hskip 1em plus
  0.5em minus 0.4em\relax PMLR, 2017, pp. 1273--1282.

\bibitem{DBLP:journals/jsac/ZhengSC21}
S.~Zheng, C.~Shen, and X.~Chen, ``Design and analysis of uplink and downlink
  communications for federated learning,'' \emph{{IEEE} J. Sel. Areas Commun.},
  vol.~39, no.~7, pp. 2150--2167, 2021.

\bibitem{8952884}
K.~Yang, T.~Jiang, Y.~Shi, and Z.~Ding, ``Federated learning via over-the-air
  computation,'' \emph{IEEE Trans. on Wireless Commun.}, vol.~19, no.~3, pp.
  2022--2035, 2020.

\bibitem{DBLP:journals/jsac/WangXSC22}
Y.~Wang, Y.~Xu, Q.~Shi, and T.~Chang, ``Quantized federated learning under
  transmission delay and outage constraints,'' \emph{{IEEE} J. Sel. Areas
  Commun.}, vol.~40, no.~1, pp. 323--341, 2022.

\bibitem{DBLP:journals/corr/abs-2006-10672}
M.~M. Amiri, D.~G{\"{u}}nd{\"{u}}z, S.~R. Kulkarni, and H.~V. Poor, ``Federated
  learning with quantized global model updates,'' \emph{arXiv preprint
  arXiv:2006.10672}, 2020.

\bibitem{DBLP:conf/acssc/MoonsGBV17}
B.~Moons, K.~Goetschalckx, N.~V. Berckelaer, and M.~Verhelst, ``Minimum energy
  quantized neural networks,'' in \emph{Asilomar Conference on Signals,
  Systems, and Computers ({ACSSC})}, M.~B. Matthews, Ed.\hskip 1em plus 0.5em
  minus 0.4em\relax {IEEE}, 2017, pp. 1921--1925.

\bibitem{Moons2019}
B.~Moons, D.~Bankman, and M.~Verhelst, \emph{Embedded Deep Neural
  Networks}.\hskip 1em plus 0.5em minus 0.4em\relax Cham: Springer
  International Publishing, 2019, pp. 1--31.

\bibitem{DBLP:journals/jmlr/HubaraCSEB17}
I.~Hubara, M.~Courbariaux, D.~Soudry, R.~El{-}Yaniv, and Y.~Bengio, ``Quantized
  neural networks: Training neural networks with low precision weights and
  activations,'' \emph{J. Mach. Learn. Res.}, vol.~18, pp. 187:1--187:30, 2017.

\bibitem{DBLP:journals/corr/abs-2202-07775}
G.~Interdonato and S.~Buzzi, ``The promising marriage of mobile edge computing
  and cell-free massive {MIMO},'' in \emph{{IEEE} International Conference on
  Communications, {ICC}}, 2022, pp. 243--248.

\bibitem{DBLP:conf/icassp/LiLL18}
Q.~Li, J.~Lei, and J.~Lin, ``Min-max latency optimization for multiuser
  computation offloading in fog-radio access networks,'' in \emph{{IEEE}
  International Conference on Acoustics, Speech and Signal Processing,
  {ICASSP}}.\hskip 1em plus 0.5em minus 0.4em\relax {IEEE}, 2018, pp.
  3754--3758.

\bibitem{DBLP:journals/tgcn/NgoTDML18}
H.~Q. Ngo, L.~Tran, T.~Q. Duong, M.~Matthaiou, and E.~G. Larsson, ``On the
  total energy efficiency of cell-free massive {MIMO},'' \emph{{IEEE} Trans.
  Green Commun. Netw.}, vol.~2, no.~1, pp. 25--39, 2018.

\bibitem{DBLP:journals/twc/MasoumiE20}
H.~Masoumi and M.~J. Emadi, ``Performance analysis of cell-free massive {MIMO}
  system with limited fronthaul capacity and hardware impairments,''
  \emph{{IEEE} Trans. Wirel. Commun.}, vol.~19, no.~2, pp. 1038--1053, 2020.

\bibitem{pmlr-v54-mcmahan17a}
B.~McMahan, E.~Moore, D.~Ramage, S.~Hampson, and B.~A.~y. Arcas,
  ``{Communication-Efficient Learning of Deep Networks from Decentralized
  Data},'' in \emph{Proc. of International Conference on Artificial
  Intelligence and Statistics}, vol.~54.\hskip 1em plus 0.5em minus 0.4em\relax
  PMLR, 20--22 Apr 2017, pp. 1273--1282.

\bibitem{9488679}
B.~Luo, X.~Li, S.~Wang, J.~Huang, and L.~Tassiulas, ``Cost-effective federated
  learning design,'' in \emph{IEEE INFOCOM}, 2021, pp. 1--10.

\bibitem{DBLP:books/sp/BauschkeC11}
H.~H. Bauschke and P.~L. Combettes, \emph{Convex Analysis and Monotone Operator
  Theory in Hilbert Spaces}, ser. {CMS} Books in Mathematics.\hskip 1em plus
  0.5em minus 0.4em\relax Springer, 2011.

\bibitem{DBLP:books/sp/NocedalW99}
J.~Nocedal and S.~J. Wright, \emph{Numerical Optimization}.\hskip 1em plus
  0.5em minus 0.4em\relax Springer, 1999.

\bibitem{DBLP:journals/tsp/ShenY18}
K.~Shen and W.~Yu, ``Fractional programming for communication systems - part
  {I:} power control and beamforming,'' \emph{{IEEE} Trans. Signal Process.},
  vol.~66, no.~10, pp. 2616--2630, 2018.

\bibitem{cvx}
I.~CVX~Research, ``{CVX}: Matlab software for disciplined convex programming,
  version 2.0,'' \url{http://cvxr.com/cvx}, Aug. 2012.

\bibitem{gb08}
M.~Grant and S.~Boyd, ``Graph implementations for nonsmooth convex programs,''
  in \emph{Recent Advances in Learning and Control}.\hskip 1em plus 0.5em minus
  0.4em\relax Springer-Verlag Limited, 2008, pp. 95--110.

\bibitem{7870353}
B.~Moons, R.~Uytterhoeven, W.~Dehaene, and M.~Verhelst, ``14.5 envision: A
  0.26-to-10tops/w subword-parallel dynamic-voltage-accuracy-frequency-scalable
  convolutional neural network processor in 28nm fdsoi,'' in \emph{IEEE
  International Solid-State Circuits Conference (ISSCC)}, 2017, pp. 246--247.

\bibitem{DBLP:conf/iclr/LiHYWZ20}
X.~Li, K.~Huang, W.~Yang, S.~Wang, and Z.~Zhang, ``On the convergence of fedavg
  on non-iid data,'' in \emph{Proc. of International Conference on Learning
  Representations (ICLR)}, 2020.

\end{thebibliography}
\begin{appendices} 
\section{Proof of Lemma 1}\label{app1}

With a random vector $ \bm{w} $ and a quantization level $ C_{\text{prec}} $, we can obtain that the quantized $ Q(w) $ is unbiasedly estimated where $ w \in \bm{w} $. First, we have
\begin{equation}
\begin{aligned}
\expp{ Q(w)} &= \operatorname{sign}(w) \cdot s_{i} \cdot \frac{s_{i+1} - |w|}{s_{i+1}- s_i} + \operatorname{sign}(w) \cdot s_{i+1}  \cdot \frac{|w|- s_i}{s_{i+1}- s_i }\\
&= \frac{\operatorname{sign}(w) }{ s_{i+1}- s_i}   ( s_{i} \cdot (s_{i+1} - |w|) +   s_{i+1} \cdot (|w|- s_i)   )=\operatorname{sign}(w) |w| = w.
\end{aligned}
\end{equation}
As for the quantization error, we have
\begin{equation}
\begin{aligned}
\expp{\norm{ Q(w) - w}^2} &= (\operatorname{sign}(w) \cdot s_{i}  -w)^2 \cdot   \frac{s_{i+1} - |w|}{s_{i+1}- s_i} + (\operatorname{sign}(w) \cdot s_{i+1}-w )^2 \cdot   \frac{|w|- s_i}{s_{i+1}- s_i }\\
&=( |w| - s_i   )(  s_{i+1} - |w|) \leq \left(\frac{s_{i+1} - s_i}{2}\right)^2 \overset{  (a)}{=}\frac{| w_{\max}  -  w_{\min} | ^2 }{ 4 (2^{C_{\text{prec}}} -1)^2},
\end{aligned}
\end{equation}
where $ (a) $ comes from
$|s_{i+1} - s_i|  = \frac{|   w_{\max}  -  w_{\min} | }{2^{C_{\text{prec}} }- 1}$.
Thus, we have
\begin{equation}
\begin{aligned}
\expp{\norm{ Q(\bm{w}) - \bm{w}}^2} &= \sum_{i=1}^{d}\expp{\norm{ Q(w_i) - w_i}^2} \leq d \frac{| w_{\max}  -  w_{\min} | ^2 }{ 4 (2^{C_{\text{prec}}} -1)^2}  \leq \frac{d \varepsilon \norm{\bm{w}}^2}{4 (2^{C_{\text{prec}}} -1)^ 2},
\end{aligned}
\end{equation}
where $ \varepsilon = \frac{| w_{\max}  -  w_{\min} |^2}{  \norm{\bm{w}}^2  } $ and $ 0 \leq \varepsilon \leq 1 $. Note that the value of $ \varepsilon $ depends on the skewness of the magnitudes of the entries of $ \bm{w} $. In particular, $ \varepsilon $ increases when $ \bm{w} $ have more skewed entries with a higher variance. $ \varepsilon = 1 $ if and only if $ \bm{w} $ has only one non-zero element, and $ \varepsilon = 0 $ if and only if $ \bm{w}$ have the same magnitude. 


\section{Proof of Theorem~\ref{the1}} \label{app3}
In this work, we first introduce some useful notations and then give a useful lemma to prove Theorem~\ref{the1}. 
In our proof, we follow the analysis steps as in \cite{DBLP:journals/jsac/ZhengSC21} and \cite{DBLP:conf/iclr/LiHYWZ20}. Specifically, we denote $ t $ as the round of local iteration and $ \bm{w}_t^k $ as the model parameter of device $ k $ at iteration $ t $. If $ t+1 \in \mathcal{L}$, where $  \mathcal{L} = \{ iI| i=1,2,\cdots\} $, each device transmits their local model updates to the CS and the CS performs global aggregation. If $ t+1 \notin \mathcal{L}$, no global aggregation happens. To make the proof clear, we define the local model update and the global model update as
\begin{equation}
\begin{aligned}
&\bm{v}_{t+1}^k = \bm{w}_t^k - \lambda_t \nabla f_k(\bm{w}_t^{Q, k}, \xi_t^k) ; \\ 
&\bm{u}_{t+1}^k = \left \{
\begin{array}{ll}
\bm{v}_{t+1}^k,              & \text{if }  t+1  \notin  \mathcal{L}_E,\\
\frac{1}{K} \sum_{ i \in \mathcal{K}_{t+1}} \bm{v}_{t+1}^i,     & \text{if }  t+1  \in  \mathcal{L}_E;  \\
\end{array}
\right. \\
&\bm{w}_{t+1}^k  =  \left \{
\begin{array}{ll}
\bm{v}_{t+1}^k,              & \text{if }  t+1  \notin  \mathcal{L}_E,\\
\bm{w}_{t+1-E} +  \frac{1}{K} \sum_{ i \in \mathcal{K}_{t+1}} Q( \bm{d}_{t+1}^i),     & \text{if }  t+1  \in  \mathcal{L}_E;  \\
\end{array}
\right. \\
\end{aligned}
\end{equation}
where $ \bm{d}_{t+1}^k = \bm{v}_{t+1}^k -  \bm{w}_{t+1-E}$, $  \bm{w}_{t+1-E}  $ denotes the most recent global model it downloaded from the CS and $\bm{u}_{t+1}^k  $ is an auxiliary variable. We define three virtual sequences $ \bar{\bm{v}}_t  = \frac{1}{K} \sum_{k=1}^{K} \bm{v}_t^k$, $  \bar{\bm{w}}_t  = \frac{1}{K} \sum_{k=1}^{K} \bm{w}_t^k $ and $  \bar{\bm{u}}_t  = \frac{1}{K} \sum_{k=1}^{K} \bm{u}_t^k $ to facilitate the analysis. Furthermore, we define $ \bar{\bm{g}}_t =  \frac{1}{K} \sum_{k=1}^{K} \nabla f_k(\bm{w}_t^{Q, k}) $ and $ \bm{g}_t  =  \frac{1}{K} \sum_{k=1}^{K} \nabla f_k(\bm{w}_t^{Q, k}, \xi_t^k)$. Thus, we have $ \bar{\bm{v}}_t = \bar{\bm{w}}_t - \lambda_t \bm{g}_t$ and $ \mathbb{E}[\bm{g}_t] = \bar{\bm{g}}_t $. Based on Lemma~\ref{lemma1}, we give an essential lemma which can be considered as an adaptation of \cite{DBLP:journals/jsac/ZhengSC21}:
\begin{lemma}\label{lemma2}
Under Assumption 1, we have
\begin{equation}
\begin{aligned}
\expp{\norm{\bar{\bm{v}}_{t+1} - \bm{w}^*}^2} &\leq (1 - \mu \lambda_t) \expp{\norm{\bar{\bm{w}}_{t} - \bm{w}^*}^2}  + \lambda_t^2\sum_{k=1}^{K} \frac{\sigma_k^2}{K^2} + \frac{d \varepsilon M	^2 (\lambda_t^2 -\lambda_t \mu)}{4 (2^{C_{\text{prec}}} -1)^ 2} + 4 \lambda_t^2 (I-1)^2G^2.
\end{aligned}
\end{equation}
\end{lemma}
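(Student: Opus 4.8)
The plan is to carry out a standard one-step descent analysis for the averaged iterate, isolating four error sources: stochastic-gradient variance, quantization error, client drift, and the strong-convexity contraction. The starting point is the identity $\bar{\bm{v}}_{t+1} = \bar{\bm{w}}_t - \lambda_t \bm{g}_t$ established above, which I would rewrite as $\bar{\bm{v}}_{t+1} - \bm{w}^* = (\bar{\bm{w}}_t - \bm{w}^* - \lambda_t \bar{\bm{g}}_t) - \lambda_t(\bm{g}_t - \bar{\bm{g}}_t)$. Expanding the squared norm and taking the expectation over the mini-batch sampling $\{\xi_t^k\}$, the unbiasedness $\expp{\bm{g}_t} = \bar{\bm{g}}_t$ kills the cross term, so the expression splits into a ``mean'' part $\expp{\norm{\bar{\bm{w}}_t - \bm{w}^* - \lambda_t \bar{\bm{g}}_t}^2}$ and a ``variance'' part $\lambda_t^2 \expp{\norm{\bm{g}_t - \bar{\bm{g}}_t}^2}$.

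The variance part is the easy one: writing $\bm{g}_t - \bar{\bm{g}}_t = \frac{1}{K}\sum_k(\nabla f_k(\bm{w}_t^{Q,k},\xi_t^k) - \nabla f_k(\bm{w}_t^{Q,k}))$ and invoking the independence of the devices' sampling together with Assumption~\ref{ass3} bounds it by $\lambda_t^2 \sum_{k=1}^K \sigma_k^2/K^2$, exactly the first term of the claim. The work concentrates on the mean part, which I would expand as $\norm{\bar{\bm{w}}_t - \bm{w}^*}^2 - 2\lambda_t \langle \bar{\bm{w}}_t - \bm{w}^*, \bar{\bm{g}}_t\rangle + \lambda_t^2\norm{\bar{\bm{g}}_t}^2$. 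The first summand already supplies the leading $\norm{\bar{\bm{w}}_t - \bm{w}^*}^2$; the contraction factor $(1-\mu\lambda_t)$ and the quantization term must be extracted from the inner-product term, while the gradient-norm term is controlled by smoothness.

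The key manoeuvre is to write, for each device, $\bar{\bm{w}}_t - \bm{w}^* = (\bar{\bm{w}}_t - \bm{w}_t^{Q,k}) + (\bm{w}_t^{Q,k} - \bm{w}^*)$ inside the inner product against $\nabla f_k(\bm{w}_t^{Q,k})$. On the second piece I would apply $\mu$-strong convexity (Assumption~\ref{ass2}) to extract both a decrease in function value and the $-\mu\lambda_t\norm{\bm{w}_t^{Q,k}-\bm{w}^*}^2$ term that, after relating $\bm{w}_t^{Q,k}$ back to $\bar{\bm{w}}_t$, yields the contraction $(1-\mu\lambda_t)$. The gap $\bar{\bm{w}}_t - \bm{w}_t^{Q,k}$ is then split once more into the client-drift part $\bar{\bm{w}}_t - \bm{w}_t^k$ and the quantization part $\bm{w}_t^k - \bm{w}_t^{Q,k}$: bounding the former over the $I-1$ local steps via Assumption~\ref{ass4} produces the $4\lambda_t^2(I-1)^2 G^2$ term, and bounding the latter in expectation via Lemma~\ref{lemma1} together with the weight bound of Assumption~\ref{ass5} produces the $\frac{d\varepsilon M^2(\lambda_t^2-\lambda_t\mu)}{4(2^{C_{\text{prec}}}-1)^2}$ term, whose $\lambda_t^2$ and $-\lambda_t\mu$ pieces trace respectively to the smoothness bound on $\norm{\bar{\bm{g}}_t}^2$ and to the strong-convexity expansion.

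The main obstacle, as usual in FedAvg-type proofs, is this inner-product term, where the gradient is evaluated at the quantized and drifted local parameter $\bm{w}_t^{Q,k}$ while distance is measured from the average un-quantized iterate $\bar{\bm{w}}_t$. The delicate part is to decouple the two perturbations --- quantization error and client drift --- through repeated add--subtract steps and Cauchy--Schwarz/AM--GM, and to choose the AM--GM weights so that the $L$-smoothness bound (Assumption~\ref{ass1}) on $\norm{\bar{\bm{g}}_t}^2$ is absorbed, by converting gradient norms into function-value gaps and telescoping them against the strong-convexity lower bound, without destroying the $(1-\mu\lambda_t)$ contraction and while the quantization contribution collects cleanly into the stated coefficient.
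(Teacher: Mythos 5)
Your plan is correct and follows essentially the same route the paper takes: the standard one-step FedAvg descent analysis of \cite{DBLP:conf/iclr/LiHYWZ20}, augmented with the quantization-error bound of Lemma~\ref{lemma1} and Assumption~\ref{ass5}, which is precisely the adaptation of \cite{DBLP:journals/jsac/ZhengSC21} that the paper invokes (indeed, the paper omits the detailed derivation of Lemma~\ref{lemma2} and defers to that reference). Each term in your decomposition---mini-batch variance, client drift over $I-1$ local steps, and the quantization contribution with coefficient $\lambda_t^2-\lambda_t\mu$ arising from the smoothness and strong-convexity steps---is attributed to the same source as in the cited analysis.
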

Afterward, based on Lemma~\ref{lemma2}, we give the proof of Theorem~\ref{the1} as follows.
\begin{equation}
\begin{aligned}
\norm{ \bar{\bm{w}}_{t+1} - \bm{w}^*}^2 &= \norm{ \bar{\bm{w}}_{t+1} - \bar{\bm{u}}_{t+1} + \bar{\bm{u}}_{t+1}  - \bm{w}^*}^2\\
&=\underbrace{  \norm{  \bar{\bm{w}}_{t+1} - \bar{\bm{u}}_{t+1}  }^2  }_{E_1}+ \underbrace{ \norm{\bar{\bm{u}}_{t+1}  - \bm{w}^* }^2 }_{E_2} + \underbrace{2 \langle  \bar{\bm{w}}_{t+1} - \bar{\bm{u}}_{t+1}  ,\bar{\bm{u}}_{t+1}  - \bm{w}^* \rangle}_{E_3},
\end{aligned}
\end{equation}
For $ E_1 $, we can obtain that
\begin{equation}
\begin{aligned}
\expp{ \bar{\bm{w}}_{t+1}  } &=   \bm{w}_{t+1-E} +  \frac{1}{K} \sum_{ i \in \mathcal{K}_{t+1}} \mathbb{E} [Q( \bm{d}_{t+1}^i) ] = \bm{w}_{t+1-E} +  \frac{1}{K} \sum_{ i \in \mathcal{K}_{t+1}}  \bm{d}_{t+1}^i\\
&=\bm{w}_{t+1-E} +  \frac{1}{K} \sum_{ i \in \mathcal{K}_{t+1}}  ( \bm{v}_{t+1}^i - \bm{w}_{t+1-E}) =  \frac{1}{K} \sum_{ i \in \mathcal{K}_{t+1}}   \bm{v}_{t+1}^i  = \bar{\bm{u}}_{t+1}.
\end{aligned}
\end{equation}
As for the expected norm, we have 
\begin{equation}
\begin{aligned}
\expp{E_1} &= \expp{ \norm{  \bar{\bm{w}}_{t+1} - \bar{\bm{u}}_{t+1}  }^2} \leq \frac{1}{\bar{K}^2} \sum_{k\in\mathcal{K}_{t+1}} \expp{\norm{ Q(\bm{d}^k_{t+1} )- \bm{d}_{t+1}^k }^2} 
 \leq \frac{1}{\bar{K}^2}  \sum_{k\in\mathcal{K}_{t+1}} \frac{d \varepsilon \norm{\bm{d}_{t+1}^k}^2 }{4(2^{C_\text{prec}} - 1)^2  }  \\&\leq \frac{d \varepsilon }{\bar{K}^2 4(2^{C_\text{prec}} - 1) ^2}   \sum_{k\in\mathcal{K}_{t+1}} \norm{ \sum_{\tau=t+1-I}^{t} \lambda_\tau \nabla F_k( \bm{w} _{\tau}^k, \xi_\tau^k ) }^2 \leq \frac{4d \varepsilon I ^2 \lambda_t^2 G^2}{\bar{K}  4(2^{C_\text{prec}} - 1) ^2  } .
\end{aligned}
\end{equation}
Also, $ E_3 $ = 0 after taking expectation. For $ E_2 $, we derive an upper bound by
\begin{equation}
\begin{aligned}
\norm{\bar{\bm{u}}_{t+1}  - \bm{w}^* }^2 &= \norm{\bar{\bm{u}}_{t+1} - \bar{\bm{v}}_{t+1}  +  \bar{\bm{v}}_{t+1} - \bm{w}^*   }^2\\
&= \norm{   \bar{\bm{u}}_{t+1} - \bar{\bm{v}}_{t+1}  }^2 + \norm{ \bar{\bm{v}}_{t+1} - \bm{w}^*  }^2 + \underbrace{2 \langle \bar{\bm{u}}_{t+1} - \bar{\bm{v}}_{t+1}, \bar{\bm{v}}_{t+1} - \bm{w}^*  \rangle}_{E_1}.
\end{aligned}
\end{equation}
Since $ \expp{  \bar{\bm{u}}_{t+1} } = \bar{\bm{v}}_{t+1} $ and $ \expp{\norm{ \bar{\bm{u}}_{t+1} - \bar{\bm{v}}_{t+1}  }  ^2 }   \leq   \frac{ 4(K-\bar{K}) }{ \bar{K}(K-1) } \lambda_t^2I^2G^2 $, we have
\begin{equation}
\begin{aligned}
\expp{E_2} = \expp{\norm{\bar{\bm{u}}_{t+1}  - \bm{w}^* }^2}  \leq   \frac{ 4(K-\bar{K}) }{ \bar{K}(K-1) } I^2G^2\lambda_t^2,
\end{aligned}
\end{equation}
where $E_1  $ becomes zero after taking the expectation. Therefore, we have
\begin{equation}
\begin{aligned}
&\expp{ \norm{ \bar{\bm{w}}_{t+1} - \bm{w}^*}^2  } \leq \expp{ \norm{ \bar{\bm{v}}_{t+1} - \bm{w}^*}^2} +  \frac{4d \varepsilon I ^2 \lambda_t^2 G^2}{\bar{K}  4(2^{C_\text{prec}} - 1) ^2  }    +  \frac{ 4(K-\bar{K}) }{ \bar{K}(K-1) } I^2G^2\lambda_t^2\\
&=(1- \mu\lambda_t) \expp{ \norm{  \bar{\bm{w}}_t -\bm{w}^*}^2 } + \lambda_t^2\sum_{k=1}^{K} \frac{\sigma_k^2}{K^2} + \frac{d}{2^{2n}} (\lambda_t^2-\lambda_t \mu)+ 4 \lambda_t^2 (I-1)^2G^2\\
&\quad + \frac{4d \varepsilon I ^2 \lambda_t^2 G^2}{\bar{K}  4(2^{C_\text{prec}} - 1) ^2  }    +  \frac{ 4(K-\bar{K}) }{ \bar{K}(K-1) }I^2G^2\lambda_t^2.
\end{aligned}
\end{equation}
Since $ \lambda_t > \lambda_t^2 $, we have
\begin{equation}\label{lastd}
\begin{aligned}
&\expp{ \norm{ \bar{\bm{w}}_{t+1} - \bm{w}^*}^2  } \leq \expp{ \norm{ \bar{\bm{v}}_{t+1} - \bm{w}^*}^2} +  \frac{4d I ^2 \lambda_t^2 G^2}{K 2^{2n}}   +  \frac{ 4(K-\bar{K}) }{ \bar{K}(K-1) } I^2G^2\lambda_t^2\\
&=(1- \mu\lambda_t) \expp{ \norm{  \bar{\bm{w}}_t -\bm{w}^*}^2 } + \lambda_t^2\sum_{k=1}^{K} \frac{\sigma_k^2}{K^2}\\
&+  \frac{d \varepsilon M	^2}{4 (2^{C_{\text{prec}}} -1)^ 2} (\lambda_t^2-\lambda_t \mu)+ 4 \lambda_t^2 (I-1)^2G^2 +\frac{4d \varepsilon I ^2 \lambda_t^2 G^2}{\bar{K}  4(2^{C_\text{prec}} - 1) ^2  }    +  \frac{ 4(K-\bar{K}) }{ \bar{K}(K-1). }I^2G^2\lambda_t^2.
\end{aligned}
\end{equation}
Thus, we rewrite \eqref{lastd} as
\begin{equation}
\begin{aligned}
\expp{ \norm{ \bar{\bm{w}}_{t+1} - \bm{w}^*}^2  } \leq (1- \mu\lambda_t) \expp{ \norm{  \bar{\bm{w}}_t -\bm{w}^*}^2 } + \lambda_t^2 D,
\end{aligned}
\end{equation}
\begin{equation}
\begin{aligned}
D &= \sum_{k=1}^{K} \frac{\sigma_k^2}{K^2} + \frac{d \varepsilon M	^2}{4 (2^{C_{\text{prec}}} -1)^ 2}  (1- \mu)+ 4  (I-1)^2G^2+\frac{4d \varepsilon I ^2 G^2}{\bar{K}  4(2^{C_\text{prec}} - 1) ^2  }   +  \frac{ 4(K-\bar{K}) }{ \bar{K}(K-1) }I^2G^2.
\end{aligned}
\end{equation}
Due to $\expp{ \norm{  \bar{\bm{w}}_t-\bm{w}^*}^2} \leq \frac{\beta^2 D }{(\beta \mu -1)(\gamma+t)} $ satisfies for given learning rate $\lambda_t = \frac{\beta}{t+\gamma}$ \cite{DBLP:conf/iclr/LiHYWZ20}, we obtain Theorem~\ref{the1} by the strong convexity of $ F(\cdot)$,
$\expp{ F(\bar{\bm{w}}_t)} - F(\bm{w}^*) \leq \frac{L}{2} \expp{ \norm{  \bar{\bm{w}}_t-\bm{w}^*}^2} \leq  \frac{L}{2}\frac{\nu}{\gamma+t} \leq   \frac{L}{2(\gamma+t)}  \frac{\beta^2 D }{\beta \mu -1}.$ Finally, we change the time scale to local iteration.

\end{appendices}

\end{document}